\newcommand\undermat[2]{%
  \makebox[0pt][l]{$\smash{\underbrace{\phantom{%
\begin{matrix}#2\end{matrix}}}_{\text{$#1$}}}$}#2}
\newtheorem{theorem}{Theorem}
\newtheorem{lemma}{Lemma}
\newtheorem{example}{Example}
\newtheorem{proposition}{Proposition}
\newtheorem{corollary}{Corollary}
\newtheorem{definition}{Definition}
\newtheorem{problem}{Problem}
\newtheorem{remark}{Remark}
\DeclareMathOperator{\diag}{diag}
\newcommand{\baike}[1]{\textcolor{black}{#1}}
\def\BibTeX{{\rm B\kern-.05em{\sc i\kern-.025em b}\kern-.08em
    T\kern-.1667em\lower.7ex\hbox{E}\kern-.125emX}}
\begin{document}
\title{A Dissipativity Approach to Analyzing \\ Composite Spreading Networks}
\author{Baike She and Matthew Hale$^*$
\thanks{* 
Baike She and Matthew Hale are with the School of Electrical and Computer Engineering,
        Georgia Institute of Technology, Atlanta, GA, 30318, USA. 
        {\tt\small bshe6@gatech.edu; mhale30@gatech.edu}. This work was funded by DARPA under grant no. HR00112220038.}
}
\maketitle

\begin{abstract}
The study of spreading processes often analyzes networks at different
resolutions, e.g., at the level of individuals or countries,
but it is not always clear how properties at one resolution can carry
over to another. Accordingly, in this work we use dissipativity theory
from control system analysis 
to characterize composite spreading networks that are comprised
by many interacting subnetworks. 
We first develop a method to represent spreading networks that have inputs and outputs. Then we define a composition operation for 
composing multiple spreading networks into a larger composite spreading network. 
Next, we
develop storage and supply rate functions that can be used to demonstrate that spreading dynamics are dissipative. We then derive conditions 
under which a composite spreading network will converge to a disease-free equilibrium as long as its constituent spreading networks are dissipative with respect to those storage and supply rate functions. 
To illustrate these results, 
we use simulations of an influenza outbreak in a primary school, and we show
that an outbreak can be prevented by 
decreasing the average interaction time between any pair of classes to less than $79\%$ of the original interaction time.
\end{abstract}

\begin{IEEEkeywords}
Network $SIS$ Spreading Dynamics; Composite System; Dissipativity Theory
\end{IEEEkeywords}

\section{Introduction}
\label{sec:introduction}
Disease spreading network models are powerful tools for analyzing virus transmission dynamics at various resolutions, ranging from individual to continent-wide scales. When studying the spread of disease across multiple communities, it is common to construct spreading networks to understand transmission within each community, while also considering how these communities interact within a larger network. For example, during a pandemic, restricting travel between countries can localize the spread within each country~\cite{parr2020traffic,march2021tracking}.
Thus, we can consider the spreading network across different countries as a composite network, with its constituent networks representing the spread within each country.
Therefore, it is essential not only to study entire spreading networks, but also to understand their behavior by analyzing the interactions between their constituent networks. In this work, we 
use dissipativity theory to do so. 

Dissipativity analysis provides tools for studying the input-output behaviors of dynamical systems. One key benefit of dissipativity theory is that it enables
the analysis of a composite system by analyzing its subsystems' input-output interactions~\cite{willems2007dissipative}. 
It has been used, for example, in past work to analyze cyber-physical systems~\cite{zakeri2022passivity} and in 
the design and verification of large-scale systems~\cite{arcak2022compositional}.
These and other works
demonstrate that when subsystems exhibit dissipative input-output behavior, the overall composite system can achieve stability under certain types of interconnections~\cite{hill2022dissipativity,schweidel2021compositional}.

In the context of modeling and analyzing disease spreading processes, dissipativity analysis has been used to model propagation and mitigation of competing and coexisting viruses within a 
network~\cite{lee2015passivity,lee2016adaptive}. 
However, despite its success in studying network spreading processes, dissipativity analysis has rarely been applied to composite spreading networks formed by smaller 
constituent spreading networks, which one often encounters in practice.
Hence, we aim to fill this gap by understanding how the interconnection of constituent spreading networks influences the overall spreading dynamics of the composite network.

In this work, we consider the spreading behavior of the network Susceptible-Infected-Susceptible ($SIS$) model~\cite{van2011n}. 
Our contributions are as follows:
\begin{itemize}
\item We develop a model of spreading networks with inputs and outputs, as well as
an operation to compose such networks
(Definitions~\ref{Def_Input_Vector},~\ref{Def_Input_Transmission_Matrix}, and~\ref{Def_Comp_M})
\item Using dissipativity theory, we define storage and supply rate functions to characterize the input-output behavior individual networks (Theorem~\ref{Thm: SIS_Dissipativity})
\item We use those network analyses to determine conditions under which disease spreading in a composite network SIS model will go to zero (Theorems~\ref{Thm-Stability-Composite-Network} and~\ref{Thm_Composite_Net_Individual})
\item We derive explicit conditions under which disease spreading cannot
be guaranteed to go to zero (Corollary~\ref{cor_comp_network_1})
\item We illustrate these results by showing that, in simulations of an influenza spread in a primary school, 
decreasing the average
interaction time between any pair of classes to less than
$79\%$ of the original interaction time can 
effectively prevent an outbreak on campus.
\end{itemize}



This paper is organized as follows. Section~\ref{Sec_Background} gives background and problem statements, and Section~\ref{Sec_SIS_Model} defines network $SIS$ models with inputs and outputs. Section~\ref{Sec_Dissipative} explores dissipative network $SIS$ dynamics, and Section~\ref{sec_composite_sis} investigates 
composite network $SIS$ models. Section~\ref{sec_application} applies our 
results to real-world networks, and Section~\ref{sec_conclusion} concludes.

\subsection*{Notation}
We use~$\mathbb{R}$ to denote the reals and~$\mathbb{N}$ to denote the naturals. 
Let $\underline{n}$ denote the index set $\{1,2,\dots,n\}$.
For a finite set $S$, we use $|S|$ to denote its cardinality. 
For a matrix $A\in\mathbb{R}^{n\times n}$, we use $A_{ij}$ to denote the $i^{th}j^{th}$ entry of $A$. 
We use $A_{i,:}$ for the~$i^{th}$ row of a matrix~$A$
and $A_{:,j}$ for its $j^{th}$ column. 
We use $\rho(A)$  and $\sigma(A)$ to represent the spectral radius and spectral abscissa of the matrix $A$, respectively. 
We use $A\preceq 0$ to denote that~$A$ is negative semidefinite 
and $A\succeq 0$ to that~$A$ is positive semidefinite. For two matrices $A,B\in\mathbb R^{n\times n}$, we use $A\geq B$ to mean $A_{ij}\geq B_{ij}$ for all~$i, j \in \underline{n}$, and 
we use $A>B$ to mean both
$A_{ij}\geq B_{ij}$ and $A\neq B$. 
For a collection of matrices $A^i\in\mathbb R^{n_i\times n_i}$ with $i\in\underline m$, we use $\diag\{A_1, \cdots, A_m\}$ to represent the block diagonal matrix whose the $i^{th}$
diagonal block is~$A^i$. 
For~$x\in \mathbb{R}^n$, we use $\diag(x)\in\mathbb{R}^{n\times n} $ to denote the diagonal matrix whose $i^{th}$ diagonal entry is $x_i$ for all $i \in\underline{n}$. 
We use $\|x\|$ to represent the $2$-norm of the vector~$x$.
For~$x, y\in\mathbb{R}^n$, we use $x>y$ (or $x\geq y$) to denote that  $x_i>y_i$ (or $x_i\geq y_i$) for all $i\in\underline n$. We use $\boldsymbol{0}$ and $\boldsymbol{1}$ to denote the zero vector and one vector with the corresponding dimension given by context. 
Let $[a,b]^n$ denote a closed cube and $(a,b)^n$ denote an open cube,  for any $a,b\in \mathbb{R}$.

\section{Background and Problem Formulation}
\label{Sec_Background}
In this section, we introduce the background of network $SIS$ models and formally state the problems of interest.

\subsection{Network $SIS$ Spreading Dynamics}
We consider an epidemic spreading process over $m\in\mathbb N$ different directed,
strongly connected networks. For $k \in \underline m$,
we represent the $k^{th}$ spreading network by
a directed graph $\mathcal{G}^k=\left(\mathcal{V}^k,\mathcal{E}^k\right)$ on~$n_k \in \mathbb{N}$
nodes,  where the node set $\mathcal{V}^k=\left\{ v^k_{1},\ldots,v^k_{n_k}\right\}$ represents
$n_k$ entities (e.g., individuals, communities, countries) 
and the edge set $\mathcal{E}^k\subseteq \mathcal{V}^k\times \mathcal{V}^k$ 
represents the epidemic spreading interactions between them. 
For the $k^{th}$ spreading network, the epidemic
spreading interactions over $n_k$ nodes are captured by the transmission
matrix $B^k=\left[\beta^k_{ij}\right]\in\mathbb R_{\geq 0}^{n_k\times n_k}$
for all $i,j\in \underline{n_k}$. 
A directed edge within the $k^{th}$ spreading network, say~$\left(v^k_{j},v^k_{i}\right)$,
indicates that node $j$ can
infect node $i$ with the transmission rate $\beta^k_{ij}$, for all $i,j\in \underline n_k$. 

We use a network Susceptible-Infected-Susceptible ($SIS$) model on each of the~$k$ graphs
to model viral spreading dynamics. 
Within the~$k^{th}$ $SIS$ model, 
we define the dynamics of node $i\in\underline{n_k}$ as
\begin{equation}
\label{Eq_x_sis_no_input}
    \dot x^k_i (t) = -\gamma^k_i x^k_i (t) + \big(1-x^k_i(t)\big) \sum_{j\in \mathcal{N}^k_i} \beta^k_{ij}x^k_{j}(t),
\end{equation}
where $x^k_i\in[0,1]$ is the infected proportion within 
the $i^{th}$ node of the $k^{th}$ network, for all $k\in\underline m$ and $i\in \underline {n_k}$. 
We use $\gamma^k_i>0$ to represent the recovery rate of the $i^{th}$ node of the $k^{th}$ network for all $i \in \underline {n_k}$ and $k\in\underline m$. 
Additionally, we use $\mathcal{N}^k_i$ 
to represent the in-neighbor set of node $i$ in the $k^{th}$ spreading network, such that each node in the in-neighbor set of the node~$i$ can transmit the virus to node $i$ directly.

For network $SIS$ dynamics over a graph $\mathcal{G}^k$, we write the compact form of 
the system dynamics over all nodes as
\begin{equation}
\label{Eq: SIS_Cpct}
  \dot x^k (t) = \big(-\Gamma^k + B^k - \diag(x^k)B^k\big) x^k(t),
\end{equation}
where $x^k\in [0,1]^{n_k}$ is the infection vector and $x^k_i$ denotes the infected proportion within the $i^{th}$ node of the $k^{th}$ network. 
We use a diagonal matrix $\Gamma^k\in\mathbb R_{\geq 0}^{n_k\times n_k}$ to denote the recovery matrix, where $[\Gamma^k]_{ii} = \gamma^k_i$ for all $i\in \underline {n_k}$ and $k\in \underline m$.

\subsection{Motivational Example}
\label{Sec_Motivational_Example}
We frame our problem around mitigating epidemic spread across interconnected $SIS$ network models spanning multiple regions. During a global outbreak, halting inter-regional interactions helps avoid large-scale outbreaks. As the spread starts to subside, a gradual lifting of restrictions is needed to resume normal activities. However, lifting restrictions too quickly can trigger new waves of infection. Therefore, the relaxation of measures must be carefully managed to prevent a resurgence due to new transmissions between previously isolated regions.

\begin{figure}
    \centering
    \includegraphics[trim={5cm 5.2cm 4cm 4.3cm}, clip, width=0.6\columnwidth]{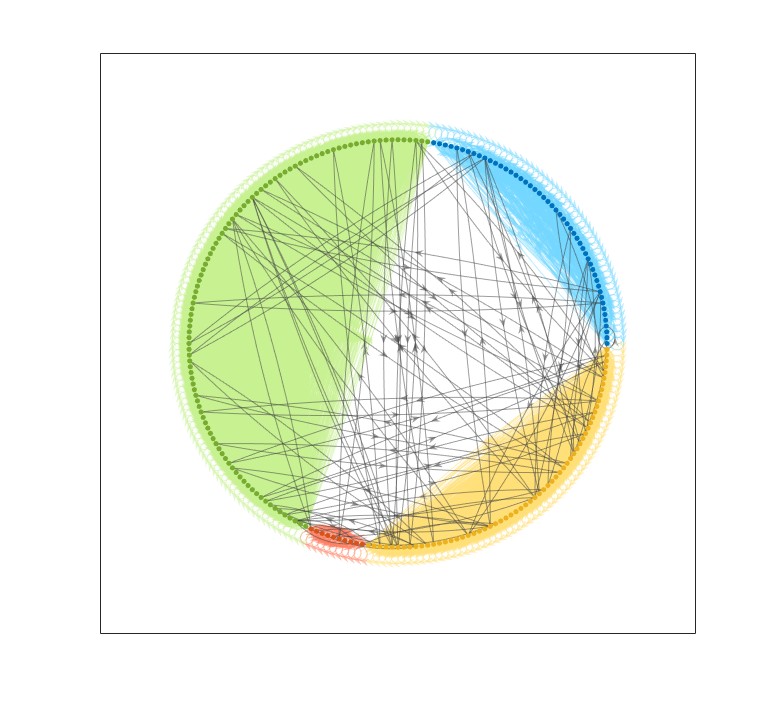}
    \caption{A composite spreading network is comprised of four subnetworks. Nodes in red, yellow, blue, and green around the perimeter represent populations within different networks, and edges of the same color (drawn in the interior) indicate transmission interactions within each network. Black edges connecting nodes from different networks (i.e., nodes of different colors) illustrate how these subnetworks combine to form the composite spreading network.}
    \label{fig:composite_network}
\vspace{-2ex}
\end{figure}
\begin{figure}
    \centering
    \includegraphics[trim={1.2cm 0.7cm 1.1cm 3cm}, clip, width=0.95\columnwidth]{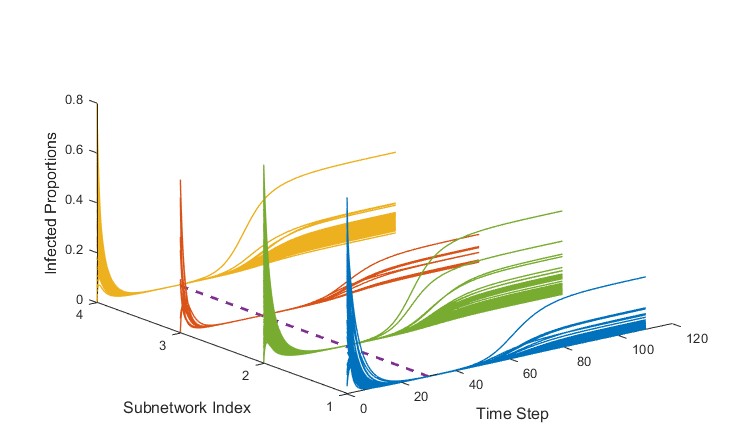}
    \caption{A plot of the sizes of the infected proportions for each network in Figure~\ref{fig:composite_network}, each of which is modeled using an~$SIS$ model. 
    Before the $30^{th}$ time step, there are no transmissions between nodes of different colors, meaning these networks are isolated.  However, at the $30^{th}$ time step (indicated by the dashed purple line), new transmission channels are introduced, leading to the presence of black edges in Figure~\ref{fig:composite_network}. These new transmission channels result in a resurgence of the outbreak across the entire composite network, despite the fact that each individual network had only a negligible portion of its population infected.}
    \label{fig:simulation_four_networks}
\end{figure}

Figure~\ref{fig:composite_network} illustrates four spreading networks governed by $SIS$ dynamics, where nodes and edges of the same color (red, yellow, green, and blue) represent four distinct, strongly connected spreading networks. 
Figure~\ref{fig:simulation_four_networks} shows the evolution of infections within
each region over time. 
From time step zero to~$30$, there are no interactions between nodes of different colors, which means that these networks are isolated. As shown in Figure~\ref{fig:simulation_four_networks}, the infections in these segregated networks begin to decrease toward zero, which may prompt policymakers to consider lifting restrictions as the infection subsides.
Now, suppose restrictions between these four areas are relaxed at time step $30$, introducing new transmission channels, which are represented by black edges between nodes of different colors in Figure~\ref{fig:simulation_four_networks}. This action re-establishes connections between the networks, leading to a resurgence of the outbreak after the $30^{th}$ time step, despite the fact that
infection levels were near zero within each network. Thus, interactions between communities can greatly increase epidemic spread, even when each community is experiencing negligible spread internally. 
This example highlights the critical need to carefully 
model outbreaks not only within communities but also across interactions among communities, and doing so
can enable the management of restrictions between previously isolated regions to prevent the recurrence of an outbreak.

\subsection{Problem Formulation}
Motivated by the example in Section~\ref{Sec_Motivational_Example}, we formulate five problems 
to analyze composite spreading networks. The scenario where the epidemic fades in the composite spreading network in Section~\ref{Sec_Motivational_Example}
corresponds to the dynamics converging to a stable, disease-free equilibrium. Therefore, in this work, we solve the following problems: 
\begin{problem}
\label{Problem-SIS-I/O}
Introduce a network $SIS$ model with inputs and outputs to capture both the internal transmission dynamics and the external transmission impact from other spreading networks. 
\end{problem}
\begin{problem}
\label{Problem-SIS-Dissipativity} 
Develop a supply rate function and a storage function to study the input-output behavior of the $SIS$ model proposed in Problem~\ref{Problem-SIS-I/O} and show that it is dissipative. 
\end{problem}
\begin{problem}
\label{Problem-SIS-Comp}   
Define an approach to construct composite $SIS$ spreading networks by composing multiple constituent spreading networks.
\end{problem}
\begin{problem}
\label{Problem-Comp-Dissipativity}
Quantify how the composition of constituent spreading networks impacts the overall behavior of the composite network through dissipativity analysis. 
\end{problem}
\begin{problem}
\label{Problem-Application}
Demonstrate the application of the developed theoretical results in decision-making for real-world pandemic mitigation strategies.
\end{problem}

We solve Problem~\ref{Problem-SIS-I/O} in Section~\ref{Sec_SIS_Model}, Problem~\ref{Problem-SIS-Dissipativity} in Section~\ref{Sec_Dissipative}, Problems~\ref{Problem-SIS-Comp} and~\ref{Problem-Comp-Dissipativity} in Section~\ref{sec_composite_sis}, and Problem~\ref{Problem-Application} in Section~\ref{sec_application}.

\section{Network \texorpdfstring{$SIS$}{SIS} Dynamics w/ Inputs and Outputs}
\label{Sec_SIS_Model}
We introduce a network $SIS$ spreading model with inputs and outputs in this section. As above, we consider the $k^{th}$ network, where $k \in \underline{m}$, from a total of $m \in \mathbb{N}$ spreading networks. In this setup, nodes in the $k^{th}$ network can be infected not only by other nodes within the same network but also by nodes from the $l^{th}$ network, for all $l \in \underline{m} \setminus \{k\}$.
Therefore, when there is an edge from node $j$ in network $\mathcal{G}^l$ to node $i$ in network $\mathcal{G}^k$, we use $\beta^{kl}_{ij}$ to denote the transmission rate, where $i \in \underline{n_k}$, $j \in \underline{n_l}$, and $k, l \in \underline{m}$ with $k \neq l$. For node~$i$ in the~$k^{th}$ spreading network, we use $\mathcal{N}^{kl}_i$ to represent its set of in-neighbors that are in the $l^{th}$ spreading network $\mathcal{G}^l$, for all $k, l \in \underline{m}$, with $k \neq l$.
The union of the in-neighbor sets of node $i$ in the $k^{th}$ spreading network, where $i \in \underline{n_k}$ and $k \in \underline{m}$, captures transmissions from all other spreading networks to node $i$ and is denoted 
\begin{equation}
\widehat{\mathcal{N}}^{k}_i:= \bigcup_{\substack{l \in \underline m \\ l \neq k}}{\mathcal{N}}^{kl}_i.
\end{equation}

Based on the network $SIS$ dynamics in~\eqref{Eq: SIS_Cpct},
for all $i\in \underline {n_k}$ and $k\in\underline m$,
we define the dynamics of the $i^{th}$ node in the $k^{th}$ spreading network with external inputs as
\begin{multline} \label{Eq_SIS_x_input}
\dot x^k_i (t) = -\gamma^k_i x^k_i (t) + \big(1-x^k_i(t)\big) \sum_{j\in \mathcal{N}^k_i}\beta^k_{ij}x^k_{j}(t)\\ 
+(1-x^k_i(t))
\sum_{j\in \widehat{\mathcal{N}}^{k}_i}\beta^{kl}_{ij}\underbrace{x^l_{j}(t)}_{\textnormal{Input}},
\end{multline}
where the external input $x^l_j\in [0,1]$ is the infected proportion of node $j$ in the $l^{th}$ network for all $j\in \widehat{\mathcal{N}}^{k}_i$ and 
$l\in\underline m$ with $k\neq l$. 
The rest of the notation in~\eqref{Eq_SIS_x_input} is defined as in~\eqref{Eq_x_sis_no_input}. 
The proposed dynamics in~\eqref{Eq_SIS_x_input} capture not only the transmission between nodes within the $k^{th}$ network $\mathcal{G}^k$, but also the transmissions from nodes of other networks through the inputs. 
In addition, the transmission mechanisms between networks follow the same dynamics as those governing the interactions between individual nodes within a single network.

Building upon~\eqref{Eq_SIS_x_input}, 
we define the input vector of the $k^{th}$ network. 

\begin{definition}[Input Vector $u^{k}$] \label{Def_Input_Vector}
We denote the input vector of the $k^{th}$ spreading network as $u^k(t)\in \mathbb R^{p_k}$, where ${p_k=\big|\cup_{i\in \underline n_k}\widehat{\mathcal{N}}^{k}_i\big|}$ represents the number of distinct inputs, i.e., total number of $p_k$ nodes in other networks that have direct interactions to any node in the $k^{th}$ network. The input vector $u^k(t)$ is concatenated from infected states from other spreading networks, namely  $\mathcal{G}^l$ with $l\in\underline m\setminus \{k\}$, and it is constructed as follows: 
\begin{enumerate}
    \item The infected states $x^l_j$, where $j \in \underline{n_l}$ and $l \in \underline{m}\setminus \{k\}$, in $u^{k}$ are first arranged in ascending order based on their superscript $l$, which indicates the network to which they belong;
    \item If two or more infected states share the same superscript, they are ordered by their subscripts in ascending order, i.e., by the node number $j$ in their corresponding network, where $j \in \underline{n_l}$ and $l \in \underline{m}\setminus \{k\}$. 
\end{enumerate}    
\end{definition}

\begin{figure}[ht]
    \centering \includegraphics[width=1\linewidth]{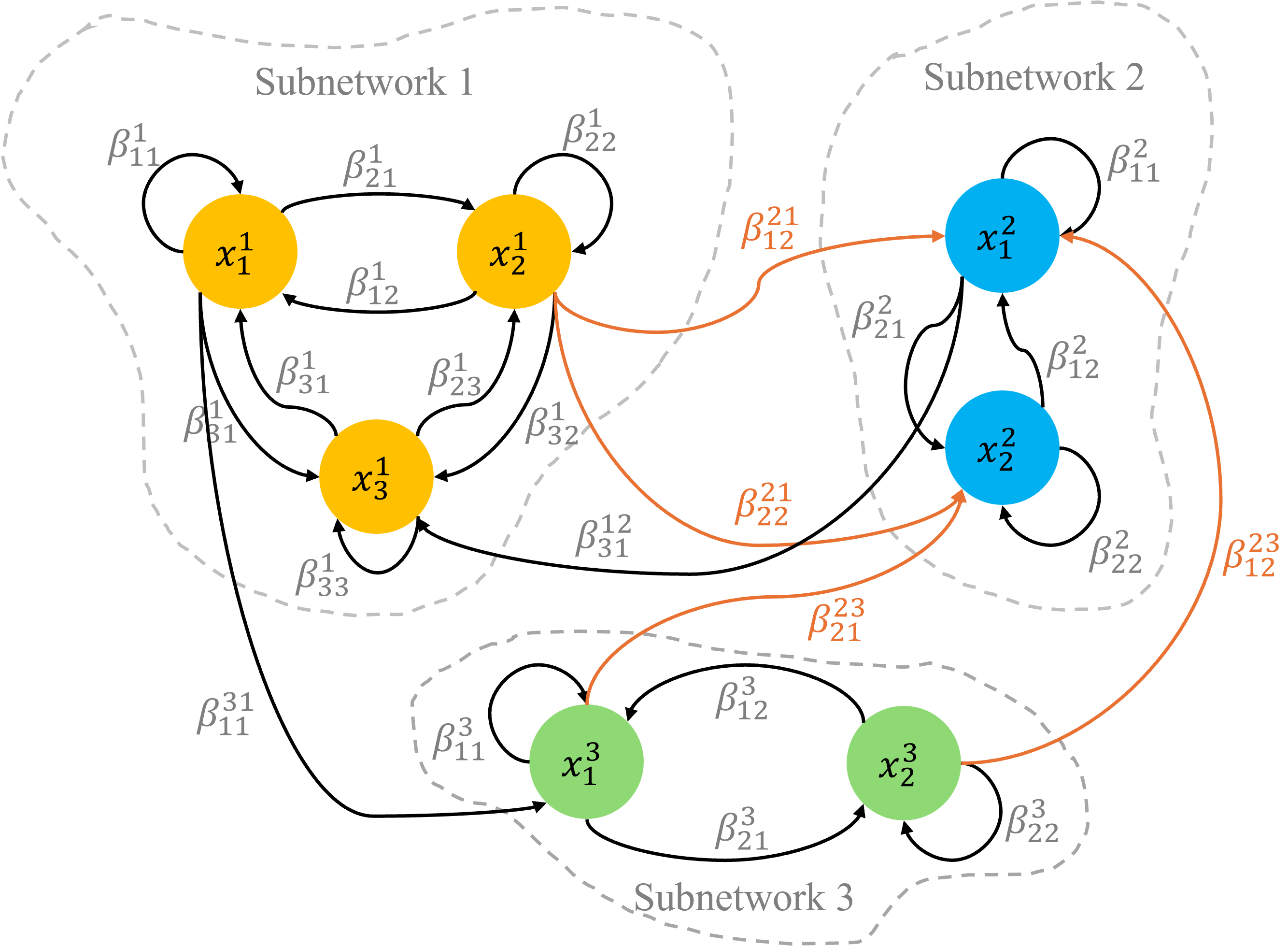}
    \caption{Composite Network of Three Subnetworks. Nodes with the same color are in the same subnetwork. We use orange edges to represent the transmissions from subnetworks 1 and 3 to subnetwork 2.}
\label{fig:Three_Subnet_Example}
\end{figure}

The next example illustrates the composition of three networks with inputs. 

\begin{example}
\label{Example_Input_Vector}
Figure~\ref{fig:Three_Subnet_Example} presents three spreading networks indexed over $k\in\underline 3$. 
We use black edges to represent transmission interactions within the same network and orange edges to represent transmission interactions across different spreading networks.
Consider Subnetwork~$2$ in Figure~\ref{fig:Three_Subnet_Example}. 
According to Definition~\ref{Def_Input_Vector}, 
we construct the input vector of subnetwork~$2$ as
\begin{align*}
   u^2 =[u^{2}_1, u^{2}_2, u^{2}_3]^\top=  [x^1_2, x^3_1, x^3_2]^\top.
\end{align*}
Here, we use $u^{2}_2$ to represent the input variable from node $1$ in Subnetwork~$3$ to Subnetwork~$2$, which is the input variable $x^3_1$. In addition, $u^2$ concatenates all inputs $x^1_2, x^3_1, x^3_2$
based on the rules in Definition~\ref{Def_Input_Vector}.
\end{example}

Definition~\ref{Def_Input_Vector} defines the input vector $u^k$. Next, we define the input transmission matrix to map the input variables in $u^k$ to the specific nodes in the $k^{th}$ spreading network, for all $k\in\underline m$.

\begin{definition}[Input Transmission Matrix] \label{Def_Input_Transmission_Matrix}
The input transmission matrix of the $k^{th}$ spreading network is defined as
$C^k \in \mathbb R^{n_k\times p_k}$. A non-zero entry takes
the form $C^k_{ij}= \beta^{ka}_{ib}$, where $a$ and $b$ are such that $x^a_b = u^{k}_j$, 
where~$u^k$ is from Definition~\ref{Def_Input_Vector}. 
If $x^k_i$ has no external inputs, then $C^k_{ij}=0$ 
for all $i\in\underline {m_k}$ and $j\in \underline {{p}_k}$. 
\end{definition}

\begin{remark}
We explain the meaning of the entry $C^k_{ij}= \beta^{ka}_{ib}$. 
The superscript $k$ in
$C^k_{ij}= \beta^{ka}_{ib}$ 
indicates that this term
captures transmissions into the~$k^{th}$ network
from other networks $\mathcal{G}^l$, where
$l\in\underline{m}\setminus \{k\}$.
The subscripts $i$ and $j$ in
$C^k_{ij}$ denote that it models the effect of 
the $j^{th}$ entry in the input vector $u^k$ upon the $i^{th}$ node in the $k^{th}$ network. 
According to Definition~\ref{Def_Input_Vector}, each entry in the input vector corresponds to an infected state in another spreading network, e.g., $u^k_j = x^a_b$, as illustrated in Example~\ref{Example_Input_Vector}. 
Therefore, we can determine the parameters $a$ and $b$ in $C^k_{ij}$ 
based on the entries of the input vector $u^k$. The remaining entries in $C^k$ 
correspond to terms that are absent from~$u^k$, which result from a lack of connection
between nodes, and these entries are filled with zeros. 
\end{remark}

Definition~\ref{Def_Input_Transmission_Matrix} constructs the input transmission matrix $C^k$ for the $k^{th}$ network $\mathcal{G}^k$ by using (i) the input vector $u^k$ and (ii) the transmission rates from other networks $\mathcal{G}^l$ (with~$l \neq k$) to the $k^{th}$ network. Using Definitions~\ref{Def_Input_Vector} and~\ref{Def_Input_Transmission_Matrix}, we derive a compact form for the dynamics of the $k^{th}$ spreading network with external inputs, namely
\begin{multline}
\label{Eq: SIS_Input}
  \dot x^k (t) = \big(-\Gamma^k + B^k - \diag(x^k)B^k\big) x^k(t)   \\ 
   + \big(C^k - \diag(x^k)C^k\big) u^k (t),
\end{multline}
where $x^k$, $B^k$, and $\Gamma^k$ are the state vector, transmission matrix, and recovery matrix of the $k^{th}$ spreading network $\mathcal{G}^k$, defined in~\eqref{Eq: SIS_Cpct}.

The $SIS$ model with inputs in~\eqref{Eq: SIS_Input} solves Problem~\ref{Problem-SIS-I/O}. In the next section, we leverage dissipativity analysis to study 
its input-output behavior. Before that, we given an example showing
the construction of such an $SIS$ model.

\begin{example}
We illustrate how to use Definition~\ref{Def_Input_Transmission_Matrix} to construct the input matrix for the $2^{nd}$ subnetwork in Figure~\ref{fig:Three_Subnet_Example}. 
Figure~\ref{fig:Three_Subnet_Example} shows that Subnetwork~$2$ has two nodes in it and three inputs from three total nodes across Subnetworks~$1$ and~$3$. According to Definition~\ref{Def_Input_Vector}, 
 the dimension of the input matrix is given by $C^2\in\mathbb R^{n_2\times p_2}$, where $n_2 = 2$ and $p_2 = 3$.
Based on Definition~\ref{Def_Input_Transmission_Matrix},
we use the transmission rates in Figure~\ref{fig:Three_Subnet_Example} to populate the entries in $C^2$ as
\begin{equation}
C^2 = \left[\begin{array}{ccc}
     \beta^{21}_{12} & 0 &   \beta^{23}_{12}   \\[2pt]
     \beta^{21}_{22} &      \beta^{23}_{21}          & 0 
    \end{array}\right].
\end{equation}
We explain each entry in $C^2$ by showing how to construct the entry $C^2_{21}$. According to Definition~\ref{Def_Input_Transmission_Matrix},
a non-zero entry takes the form $C^k_{ij} = \beta^{ka}_{ib}$.
We are considering the spread to a node in Subnetwork~$2$, which sets~$k = 2$,
and we are considering the spread to the second node in that network, which
sets~$i = 2$. 
Therefore, we have that $C^2_{2j} = \beta^{2a}_{2b}$. 
Next, we determine $a$ and $b$. Example~\ref{Example_Input_Vector} shows that the input vector $u^2= [x^1_2, x^3_1, x^3_2]^\top$. 
Because we consider~$C^2_{21}$, we have~$j=1$, and we see 
that~$u^{k=2}_{j=1} = x^{a=1}_{b=2}$. Therefore, $C^2_{21} = \beta^{21}_{22}$.

By populating the entirety of~$C^2$, 
we write the spreading dynamics from external nodes for Subnetwork~$2$ as
\begin{multline}
    \big(I-\diag(x^2)\big) C^2u^2 \\
    = 
    \left(\left[\begin{array}{cc}
        1 & 0\\
        0 & 1
    \end{array}\right]
    -
     \left[\begin{array}{cc}
         x^2_1 & 0 \\  
         0 & x^2_2 
     \end{array}\right]\right)
     \left[
     \begin{array}{ccc}
    \beta^{21}_{12} & 0 &   \beta^{23}_{12}   \\
     \beta^{21}_{22} &      \beta^{23}_{21}          & 0 
    \end{array}\right]
     \left[\begin{array}{c}
         x^1_2 \\ 
         x^3_1 \\ 
         x^3_2
     \end{array}\right] \\ 
     \!\!\!=\!  (1 \!-\! x^2_1) (\beta^{21}_{12}x^1_2 + \beta^{23}_{12}x^3_2) + (1 \!-\! x^2_2) ( \beta^{21}_{22}x^1_2 + \beta^{23}_{21}x^3_1),
\end{multline}
where $x^2\in[0,1]^2$ is the infection state vector of Subnetwork~$2$.
\end{example}
\section{Dissipative Network \texorpdfstring{$SIS$}{SIS} Systems}
\label{Sec_Dissipative}
In this section, we begin by defining dissipative $SIS$ spreading networks with inputs and outputs. Then, we introduce a supply rate function and a storage function to analyze the input-output behavior of the $SIS$ dynamics in~\eqref{Eq: SIS_Input}. Next, we examine the conditions under which the dynamics 
are dissipative with respect to these functions, 
which provides insight into the spreading networks' equilibrium properties.


We define dissipative spreading networks according to standard
definitions in dissipativity theory~\cite{willems1972dissipative}.
Dissipativity characterizes dynamical
systems broadly by how their inputs and outputs correlate. We thus begin
by rewriting
each network's $SIS$ dynamics in the typical form needed for dissipativity
analysis. 

We write the dynamics of the $k^{th}$ spreading network $\mathcal{G}^k$ in~\eqref{Eq: SIS_Input} as
\begin{align}
        \dot x^k(t) = f^k\big(x^k(t),u^k(t)\big), \ \ f^k(0, 0) = \boldsymbol{0} \label{eq:dyn}, 
\end{align}
where 
${f^k: \mathbb R^{n_k} \times \mathbb R^{p_k} \rightarrow \mathbb R^{n_k}}$ 
is equal to the right-hand side of~\eqref{Eq: SIS_Input}. 
Then the condition $f^k(0, 0) = \boldsymbol{0}$ captures the disease-free equilibrium~\cite{van2011n} of the $k^{th}$ network's $SIS$ dynamics in~\eqref{Eq: SIS_Input}.
We further define the output mapping 
of the $k^{th}$ network in~\eqref{Eq: SIS_Input} as 
\begin{align}
\label{eq:output}
    y^k(t) = h^k\big(x^k(t), u^k(t)\big) = x^k(t), \ \ y^k(0, 0)=\boldsymbol{0}, 
\end{align}
where~$h^k: \mathbb R^{n_k} \times \mathbb R^{p_k} \rightarrow \mathbb R^{n_k}$. 
We assume that the states of all nodes in the $k^{th}$ network can be monitored,
and thus the output vector $y^k = x^k$ includes all infection states. 

We can now define dissipative network $SIS$ dynamics.
\begin{definition}[Dissipative Network $SIS$ Spreading Dynamics]
\label{Def: Dissipativity}
Consider a supply rate function $S^k : \mathbb R^{p_k} \times \mathbb R^{n_k} \rightarrow \mathbb R$ for 
the $k^{th}$ spreading network in~\eqref{eq:dyn} and~\eqref{eq:output}. 
Then $k^{th}$ spreading network is dissipative with respect to~$S^k$ 
if there is a continuously differentiable storage function ${V^k : \mathbb R^{n_k} \rightarrow \mathbb R_{\geq 0}}$, where (i) $V^k(0) = 0$, (ii) $V^k(x^k)\geq 0$ for all~$x^k \in \mathbb{R}^{n_k}$, 
and~(iii)
\begin{align}
\label{eq_dissipativity_c}
    \nabla V^k(x^k)^\top f^k(x^k,u^k) \leq S^k(u^k,y^k),
\end{align}
for all $x^k, y^k \in \mathbb R^{n_k}$ and $u^k\in \mathbb R^{p_k}$.
\end{definition}

\begin{remark}
Following~\cite{arcak2016networks}, we define dissipativity in terms of a supply rate function and 
a storage function. In the network $SIS$ dynamics in~\eqref{Eq: SIS_Input}, the storage function $V^k$ depends on the level of infections within the $k^{th}$ network, which is~$x^k$,
while the supply rate function depends in part on infections from other networks $\mathcal{G}^l$ 
with~$l \neq k$. 
According to~\eqref{eq_dissipativity_c}, the change in the storage function~$V^k$ along the trajectory of the dynamics~$f^k$, under the influence of external infections represented by~$u^k$, cannot exceed the value of the supply rate function~$S^k$. This supply rate reflects both the internal infections within the $k^{th}$ network and the impact of transmissions from other networks. Therefore, dissipativity analysis allows us to study the spread within the network $\mathcal{G}^k$ in a way
that accounts for its interactions with other networks $\mathcal{G}^l$. 
\end{remark}

After defining dissipative network $SIS$ dynamics in~Definition~\ref{Def: Dissipativity}, we study the conditions under which the model in~\eqref{Eq: SIS_Input} is dissipative. Definition~\ref{Def: Dissipativity} indicates that, in order to show the 
network $SIS$ model is dissipative, we need to find suitable supply rate and storage functions.

\begin{definition} [Supply Rate Function for Network $SIS$ Dynamics]
\label{Def_Supply_Rate}
We define the supply rate of the $k^{th}$ network $SIS$ spreading dynamics in~\eqref{eq:dyn}-\eqref{eq:output} as
\begin{align}
\label{Eq_Supply_Rate}
\nonumber
S^k(u^k,y^k) &= y^k(t)^\top C^k u^k(t) + y^k(t)^\top(-\Gamma^k+ B^k) y^k(t)\\ 
  &= 
\begin{bmatrix}
u^k\\
y^k
\end{bmatrix}^\top 
\begin{bmatrix}
\boldsymbol{0} & \frac{1}{2}{C^k}^{\top}\\
\frac{1}{2}C^k & -\Gamma^k+ B^k
\end{bmatrix}
\begin{bmatrix}
u^k \\ y^k
\end{bmatrix},
\end{align}
for all $k\in\underline m$.    
\end{definition}
\begin{definition}[Storage Function for Network $SIS$ Dynamics]
\label{Def_Storage_Function}
   We define the storage function for the $k^{th}$ spreading network in~\eqref{eq:dyn}-\eqref{eq:output} as
    \begin{align}
    \label{Eq_Storage_Function}
    V^k(x^k) = \frac{1}{2}\|x^k\|^2
    \end{align}
for all $k\in\underline m$.
\end{definition}

\begin{remark}
We define both the supply rate and storage functions using quadratic forms. 
Recall from Definition~\ref{Def_Input_Vector} that $u^k$ is an input vector that captures all the external infected states that can directly infect the nodes that receive inputs in the $k^{th}$ spreading network. Further, according to~\eqref{eq:output}, 
the output~$y^k = x^k$ indicates that we can observe the infected states of all nodes. Hence, 
the first term of the supply rate function in~\eqref{Eq_Supply_Rate} represents a weighted product between the infected states at the nodes receiving inputs in the $k^{th}$ network and the corresponding inputs from other spreading networks $\mathcal{G}^l$ with~$l \neq k$. 
The second term represents a weighted sum of all infected states within the $k^{th}$ spreading network. The supply rate function $S^k$ is indeterminate because the first term is always non-negative, while the second term can be negative. 

The storage function $V^k$ works analogously to the kinetic energy function in a physical system, where a higher proportion of infection in the $k^{th}$ spreading network will generate a higher value of the storage function, indicating higher ``energy''. 
\end{remark}

Building on the supply rate and storage functions, we first explore the conditions under which the spreading network $\mathcal{G}^k$ is dissipative for all $k \in \underline{m}$.
\begin{theorem} \label{Thm: SIS_Dissipativity}
For all $k\in \underline m$, the $k^{th}$ spreading network  in~\eqref{Eq: SIS_Input} is dissipative with respect to the storage function $V^k(x^k) = \frac{1}{2}\| x^k\|^2$ and the supply rate function $S^k(u^k,y^k) = y^k(t)^\top C^k u^k(t) + y^k(t)^\top(-\Gamma^k+ B^k) y^k(t)$. 
\end{theorem}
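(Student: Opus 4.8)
The plan is to verify the dissipativity inequality~\eqref{eq_dissipativity_c} by direct substitution, since both the storage and supply rate functions are given explicitly. Because $V^k(x^k) = \tfrac{1}{2}\|x^k\|^2$, its gradient is $\nabla V^k(x^k) = x^k$, and by~\eqref{eq:output} the output satisfies $y^k = x^k$. First I would compute the left-hand side $\nabla V^k(x^k)^\top f^k(x^k, u^k)$ using the right-hand side of~\eqref{Eq: SIS_Input}, which gives
\[
(x^k)^\top(-\Gamma^k + B^k)x^k - (x^k)^\top \diag(x^k)B^k x^k + (x^k)^\top C^k u^k - (x^k)^\top \diag(x^k)C^k u^k.
\]

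Next I would write the supply rate $S^k(u^k,y^k)$ with $y^k = x^k$, namely $(x^k)^\top C^k u^k + (x^k)^\top(-\Gamma^k + B^k)x^k$, and subtract it from the previous expression. The quadratic term $(x^k)^\top(-\Gamma^k + B^k)x^k$ and the bilinear term $(x^k)^\top C^k u^k$ cancel, so inequality~\eqref{eq_dissipativity_c} reduces to the single residual condition
\[
(x^k)^\top \diag(x^k)\big(B^k x^k + C^k u^k\big) \geq 0.
\]
Writing this entrywise yields $\sum_{i \in \underline{n_k}} (x^k_i)^2\big(\sum_{j} \beta^k_{ij} x^k_j + \sum_{j} C^k_{ij} u^k_j\big)$. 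Since $B^k$ and $C^k$ collect the nonnegative within-network and cross-network transmission rates $\beta^k_{ij}$ and $\beta^{kl}_{ij}$, and the infection proportions $x^k$ and inputs $u^k$ lie in $[0,1]$, every summand is a product of nonnegative quantities, so the inequality holds and the network is dissipative with respect to $V^k$ and $S^k$.

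The step I expect to be the main obstacle is reconciling the reduced cubic inequality with the quantifier in Definition~\ref{Def: Dissipativity}, which asks for~\eqref{eq_dissipativity_c} to hold for all $x^k \in \mathbb{R}^{n_k}$ and $u^k \in \mathbb{R}^{p_k}$. The residual term $(x^k)^\top \diag(x^k)(B^k x^k + C^k u^k)$ can be negative for sign-indefinite arguments (e.g.\ when $x^k$ has negative entries), so the inequality genuinely fails outside the physical regime. I would resolve this by restricting attention to the meaningful domain $x^k \in [0,1]^{n_k}$, $u^k \in [0,1]^{p_k}$, and justifying the restriction through the standard forward-invariance of the unit cube $[0,1]^{n_k}$ under network $SIS$ dynamics: trajectories initialized with infection proportions in $[0,1]$ remain there for all time, so the entrywise nonnegativity established above is exactly the form of~\eqref{eq_dissipativity_c} needed along all physically realizable trajectories.
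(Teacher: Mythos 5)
Your proof is correct and follows essentially the same route as the paper's: compute $\nabla V^k(x^k)^\top f^k(x^k,u^k) - S^k(u^k,y^k)$, observe that the linear and bilinear terms cancel with the supply rate, and reduce the claim to the nonnegativity of $(x^k)^\top\diag(x^k)(B^kx^k + C^ku^k)$ on the physical domain. Your explicit remark about the quantifier over all of $\mathbb{R}^{n_k}$ in Definition~\ref{Def: Dissipativity} versus the restriction to $x^k\in[0,1]^{n_k}$, $u^k\in[0,1]^{p_k}$ is a point the paper handles only implicitly, so your treatment is if anything slightly more careful.
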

\emph{Proof: } See Appendix~\ref{app:t1proof}. \hfill $\blacksquare$

Theorem~\ref{Thm: SIS_Dissipativity}  characterizes the dissipativity of the $SIS$ 
spreading network $\mathcal{G}^k$ with external inputs coming from other spreading networks $\mathcal{G}^l$ with~$l \neq k$, which addresses Problem~\ref{Problem-SIS-Dissipativity}.
The second term of the supply rate $S^k(u^k,y^k)$ is ${y^k}^\top (-\Gamma^k+B^k) y^k$, which
is a quadratic form associated with the matrix $(-\Gamma^k+B^k)$. For the network $SIS$ dynamics without external inputs given in~\eqref{Eq: SIS_Cpct}, the spectral abscissa of the matrix $-\Gamma^k+B^k$  determines the properties of its
equilibria.
\begin{proposition}[\cite{fall2007epidemiological}] \label{Prop: SIS_Equi}
The network $SIS$ dynamics for the strongly connected spreading network $\mathcal{G}^k$ in~\eqref{Eq: SIS_Cpct} has
\begin{enumerate}
    \item the unique globally stable disease-free equilibrium at $x^k = \mathbf{0}$ if and only if $\sigma(-\Gamma^k+B^k)\leq 0$ (equivalently, $\rho ({\Gamma^k}^{-1}B^k) \leq 1$) for all $k\in\underline m$
    \item the unique globally stable endemic equilibrium at $x^k \in (0,1)^{n_k} $ if and only if $\sigma(-\Gamma^k+B^k)> 0$ (equivalently $\rho ({\Gamma^k}^{-1}B^k) > 1$). Meanwhile, the disease-free equilibrium at $x^k= \mathbf{0}$ is unstable.
\end{enumerate}
\end{proposition}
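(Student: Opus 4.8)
Since this is the classical threshold result for network $SIS$ models, the plan is to reprove it by exploiting the order-preserving (cooperative) and sublinear structure of the vector field $f^k(x^k) = \big(-\Gamma^k + B^k - \diag(x^k)B^k\big)x^k$ on the physically meaningful domain $[0,1]^{n_k}$, and then to reduce the entire dichotomy to the stability of the origin, which in turn is governed by the spectral quantities. First I would confirm that $[0,1]^{n_k}$ is positively invariant: on each face $x^k_i = 0$ the recovery term vanishes and the incoming transmission term is nonnegative, so $\dot x^k_i \geq 0$, while on each face $x^k_i = 1$ the factor $(1-x^k_i)$ kills the transmission term and leaves $\dot x^k_i = -\gamma^k_i < 0$. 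Thus trajectories starting in $[0,1]^{n_k}$ remain there, and the model is well posed.

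Next I would establish the two structural properties that drive the argument. Cooperativity follows because the Jacobian of $f^k$ has off-diagonal entries $(1-x^k_i)\beta^k_{ij} \geq 0$ throughout $[0,1]^{n_k}$, so it is Metzler and the induced flow is monotone; strong connectedness of $\mathcal{G}^k$ makes $B^k$ irreducible and the flow strongly monotone on the interior. Sublinearity follows from the identity $f^k(\lambda x^k) - \lambda f^k(x^k) = \lambda(1-\lambda)\diag(x^k)B^k x^k \geq \mathbf{0}$ for $\lambda \in (0,1)$ and $x^k \geq \mathbf{0}$, so $f^k$ is strictly subhomogeneous. Monotone strictly subhomogeneous systems obey a sharp dichotomy determined entirely by whether the origin is stable, which is exactly what I would invoke.

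I would then tie the stability of the origin to the spectral data. The linearization of $f^k$ at $x^k = \mathbf{0}$ is $-\Gamma^k + B^k$, so $\sigma(-\Gamma^k + B^k) < 0$ gives local asymptotic stability and $\sigma(-\Gamma^k + B^k) > 0$ gives instability. Because $\Gamma^k$ is positive diagonal and $B^k \geq \mathbf{0}$, the matrix $\Gamma^k - B^k$ is a $Z$-matrix, and standard M-matrix/Perron--Frobenius theory yields $\sigma(-\Gamma^k + B^k) \leq 0 \iff \rho\big((\Gamma^k)^{-1}B^k\big) \leq 1$, with the strict inequalities matching likewise. In the subcritical regime $\sigma \leq 0$ I would prove global stability of the disease-free equilibrium by comparison: since $\diag(x^k)B^k x^k \geq \mathbf{0}$ on the positive orthant, $\dot x^k \leq (-\Gamma^k + B^k)x^k$ entrywise, so $x^k(t) \leq e^{(-\Gamma^k + B^k)t}x^k(0) \to \mathbf{0}$ when $\sigma < 0$; the borderline case $\sigma = 0$ I would handle with the Lyapunov function $V = w^\top x^k$, where $w > \mathbf{0}$ is the left Perron eigenvector of $-\Gamma^k + B^k$, giving $\dot V = -w^\top \diag(x^k)B^k x^k \leq 0$ and, by LaSalle plus irreducibility, convergence to $\mathbf{0}$.

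The hard part will be the supercritical case $\sigma > 0$. Instability of the origin only forces nearby trajectories to leave, so to obtain existence, uniqueness, and \emph{global} attractivity of an endemic equilibrium in $(0,1)^{n_k}$ I would lean on the full monotone-subhomogeneous framework (Hirsch, Smith--Zhao): a strongly monotone, strictly subhomogeneous flow with an unstable origin and a compact forward-invariant domain has a unique positive equilibrium that attracts every nonzero trajectory. A companion fixed-point and boundedness argument using invariance of the cube would confirm that this equilibrium lies strictly inside $(0,1)^{n_k}$ rather than on its boundary. Ruling out boundary and multiple interior equilibria is the crux, and it is precisely where the subhomogeneity hypothesis, rather than mere monotonicity, is indispensable.
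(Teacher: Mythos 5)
The paper does not prove this proposition at all --- it is imported verbatim from the cited reference \cite{fall2007epidemiological}, so there is no in-paper argument to compare against. Your reconstruction is correct and follows the standard route of that literature (Lajmanovich--Yorke and its descendants): positive invariance of the cube, cooperativity plus irreducibility, strict subhomogeneity via the identity $f^k(\lambda x^k)-\lambda f^k(x^k)=\lambda(1-\lambda)\diag(x^k)B^kx^k$, the Metzler/M-matrix equivalence between $\sigma(-\Gamma^k+B^k)$ and $\rho\big(({\Gamma^k})^{-1}B^k\big)$ (which the paper itself records as Lemma~\ref{lem:irr_spe}), a comparison/Lyapunov argument in the subcritical and critical regimes, and the monotone strictly-subhomogeneous dichotomy for the supercritical regime. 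No gaps.
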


In dissipativity analysis, when the supply rate is negative, the system is dissipating more energy than it is receiving. We further analyze the properties of the supply rate in~\eqref{Eq_Supply_Rate} through the following result.

\begin{lemma} \label{Lem: supply_rate}
Consider an undirected spreading network $\mathcal{G}^k$ with $k\in\underline m$. 
The supply rate $S^k(u^k,y^k)= {y^k}^\top C^k u^k + {y^k}^\top (-\Gamma^k +B^k) g^k$ is negative only if the spreading network $\mathcal{G}^{k}$ has the globally stable disease-free equilibrium at $x^k = \boldsymbol{0}$. 
\end{lemma}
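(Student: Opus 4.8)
The plan is to prove the implication directly: assuming $S^k(u^k,y^k)<0$ for every nonzero admissible pair $(u^k,y^k)$, I would exhibit one carefully chosen pair at which $S^k$ collapses to a scalar multiple of the spectral abscissa of $-\Gamma^k+B^k$, and then hand off to Proposition~\ref{Prop: SIS_Equi}. A preliminary observation fixes the reading of ``$S^k$ is negative'': taking $u^k=\boldsymbol{0}$ and $y^k=e_i$ gives $S^k=-\gamma^k_i+\beta^k_{ii}$, which is generically negative regardless of stability, so the hypothesis cannot mean ``$S^k<0$ somewhere.'' It must mean negativity over the whole nonzero domain, and under that reading a single well-chosen evaluation point suffices.

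The key step I would take is to select that point via Perron--Frobenius. Write $M^k:=-\Gamma^k+B^k$. Since $\mathcal{G}^k$ is strongly connected and $B^k\in\mathbb{R}_{\ge 0}^{n_k\times n_k}$ while $\Gamma^k$ is diagonal, the matrix $M^k$ is an irreducible Metzler matrix: its off-diagonal entries are exactly the nonnegative rates $\beta^k_{ij}$. Perron--Frobenius theory for Metzler matrices then guarantees that the spectral abscissa $\sigma(M^k)$ is a real eigenvalue admitting a strictly positive right eigenvector $v>0$, i.e. $M^k v=\sigma(M^k)\,v$; because $\mathcal{G}^k$ is undirected, $M^k$ is moreover symmetric, so $\sigma(M^k)$ is simply its largest eigenvalue and $v$ is the associated Perron eigenvector. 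After rescaling I may assume $v\in[0,1]^{n_k}$, so that $(u^k,y^k)=(\boldsymbol{0},v)$ is an admissible nonzero argument.

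I would then evaluate the supply rate at this point. With $u^k=\boldsymbol{0}$ the cross term ${y^k}^\top C^k u^k$ vanishes, leaving
\begin{align*}
S^k(\boldsymbol{0},v)={v}^\top M^k v={v}^\top\!\big(\sigma(M^k)\,v\big)=\sigma(M^k)\,\|v\|^2 .
\end{align*}
By hypothesis $S^k(\boldsymbol{0},v)<0$, and since $\|v\|^2>0$ this forces $\sigma(M^k)<0$, hence in particular $\sigma(-\Gamma^k+B^k)\le 0$. Applying part~(1) of Proposition~\ref{Prop: SIS_Equi} would then yield that $\mathcal{G}^k$ has the unique globally stable disease-free equilibrium at $x^k=\boldsymbol{0}$, which is exactly the claim.

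The hard part here is conceptual rather than computational: recognizing that the correct test direction is the Perron eigenvector of $-\Gamma^k+B^k$, the one nonnegative direction that realizes the spectral abscissa, and that zeroing the input annihilates the always-nonnegative term ${y^k}^\top C^k u^k$ so as to isolate the indefinite quadratic form whose sign Proposition~\ref{Prop: SIS_Equi} converts into a stability statement. The secondary subtlety I would be careful to state explicitly is the quantifier in ``$S^k$ is negative'': the coordinate-direction computation $S^k(\boldsymbol{0},e_i)=-\gamma^k_i+\beta^k_{ii}$ shows the supply rate attains negative values irrespective of stability, so the lemma can only be read as negativity over all nonzero admissible arguments, and it is precisely that reading the single-point evaluation exploits.
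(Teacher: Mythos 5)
Your proof is correct under the reading you adopt (negativity of $S^k$ over all nonzero admissible pairs), and it reaches the same conclusion as the paper by a genuinely different route. The paper argues structurally: since $u^k, y^k, C^k$ are entrywise nonnegative, the cross term ${y^k}^\top C^k u^k$ is always $\geq 0$, so negativity of $S^k$ forces ${y^k}^\top(-\Gamma^k+B^k)y^k<0$; undirectedness makes $-\Gamma^k+B^k$ symmetric, so negativity of this quadratic form (implicitly, for all $y^k$) means negative definiteness, hence $\sigma(-\Gamma^k+B^k)<0$, and Proposition~\ref{Prop: SIS_Equi} finishes. You instead evaluate $S^k$ at the single point $(\boldsymbol{0},v)$ with $v$ the Perron eigenvector of the irreducible Metzler matrix $-\Gamma^k+B^k$ (Lemma~\ref{lem:irr_M} in the appendix), which collapses the supply rate to $\sigma(-\Gamma^k+B^k)\|v\|^2$ and extracts the sign of the spectral abscissa directly. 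Two things your route buys: first, it makes the quantifier issue explicit (your $S^k(\boldsymbol{0},e_i)=-\gamma^k_i+\beta^k_{ii}$ observation correctly shows the "exists" reading would make the lemma vacuous), whereas the paper leaves it implicit; second, your computation $v^\top M^k v=\sigma(M^k)\|v\|^2$ uses only $M^k v=\sigma(M^k)v$, which holds for any irreducible Metzler matrix, so your argument does not actually need the undirectedness hypothesis at all --- under the "for all" reading it would extend the lemma to directed strongly connected networks, which is stronger than what the paper claims and sits in tension with the paper's subsequent remark restricting the result to undirected graphs (that remark is really about the "exists" reading, where the symmetric-part argument breaks down). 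The one cosmetic point: you invoke symmetry to say $v$ is "the" Perron eigenvector of the largest eigenvalue, but this is not needed for your computation.
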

\emph{Proof: } See Appendix~\ref{app:l1proof}. \hfill $\blacksquare$

\begin{remark}
If the spreading network $\mathcal{G}^k$ is directed, then the supply rate~$S^k(u^k, y^k)$ in~\eqref{Eq: SIS_Cpct} can be negative, even if the network has an unstable disease-free equilibrium,. In the case of directed spreading networks, the sign of the term ${y^k}^\top(-\Gamma^k + B^k){y^k}$ does not depend on the eigenvalues of $(-\Gamma^k + B^k)$. Therefore, Lemma~\ref{Lem: supply_rate} only applies to undirected spreading networks.
\end{remark}


\section{Composite \texorpdfstring{$SIS$}{SIS} Spreading Network} \label{sec_composite_sis}
After introducing the dissipative network $SIS$ spreading dynamics, and the corresponding supply rate and storage functions in Section~\ref{Sec_Dissipative}, we explore the properties of a composite spreading network composed of dissipative constituent networks. First, we define the process of composing the $m$ spreading networks $\mathcal{G}^{k}$, where $k \in \underline{m}$, into a single composite network.

In Section~\ref{Sec_Dissipative}, the input-output perspective of the spreading network in~\eqref{Eq: SIS_Input}
provided a way to characterize the dynamics of interconnected spreading networks. 
Consider $m\in\mathbb N$ spreading networks $\mathcal{G}^k$, where the dynamics 
of the $k^{th}$ spreading network $\mathcal{G}^k$ are modeled
by~\eqref{eq:dyn}-\eqref{eq:output}. 
We further define a static matrix $M$ that captures the  coupling of these spreading networks. 
As indicated by~\eqref{Eq: SIS_Input}, 
the input $u^k$ of the $k^{th}$ network 
depends on the output $y^l$ of other networks $\mathcal{G}^l$ with~$l \neq k$. 
According to~\eqref{eq:output}, we have $y^k = x^k$ for all $k\in\underline m$. Therefore, by concatenating all the input vectors $u^k\in\mathbb R^{p_k}$ and all the output vectors $y^k\in\mathbb R^{n_k}$ for all $k,l\in\underline{m}$, we have 
\begin{align*}
  u = [{u^1}^\top, \dots, {u^m}^\top]^\top, \textnormal{\ and \ } y = [{y^1}^\top, \dots, {y^m}^\top]^\top,
\end{align*}
where $u\in\mathbb R^{\mathbf{p}}$ and $y\in\mathbb R^{\mathbf{n}}$, and $\mathbf{p} = \sum_{k=1}^m{p_k}$, $\mathbf{n} = \sum_{k=1}^m{n_k}$. 

\begin{definition} [Composition Matrix $M$] \label{Def_Comp_M}
The composition matrix $M\in \mathbb{R}^{\mathbf{p}\times \mathbf{n}}$ satisfies $u = My$.
The composition matrix $M$ is constructed through the following rules:
\begin{itemize}
    \item The~$j^{th}$ row of~$M$ is 
\begin{align*}
    \mathbb{R}^{1 \times \mathbf{n}} \ni M_{j,:} = [M^{j1},\cdots,M^{jm}],
\end{align*}
    where $M^{jk}\in \mathbb R^{1\times n_k}$
    \item If $u_j = x^k_b$, then for the $j^{th}$ row of $M$, we have that $M^{jk}_b = 1$, where $M^{jk}_b$ denotes the~$b^{th}$ entry of the vector~$M^{jk}$
    \item The rest of the entries in $M$ are set to zero.
\end{itemize}
\end{definition}

According to the composition, we define the state vector of the composite spreading network $\mathcal{G}$ as 
\begin{align*}
    x = [{x^1}^\top,\dots, {x^m}^\top]^\top,
\end{align*}
where $x\in\mathbb R^{\mathbf{n}}$.
Definition~\ref{Def_Comp_M} introduces a method for constructing a composite matrix to connect the outputs of the constituent spreading networks to their 
corresponding inputs, which generates a composite spreading network. 
This capability solves Problem~\ref{Problem-SIS-Comp}, and
we illustrate it in the following example. 

\begin{example}
Consider the composite spreading network that is comprised by 
the three spreading networks in Figure~\ref{fig:Three_Subnet_Example}. According to 
Definition~\ref{Def_Input_Vector},
    the input vectors of the three networks are given by  
\begin{align*}
    u^1=x^2_1, \quad 
    u^2 = [x^1_2,x^3_1, x^3_2]^\top, \quad
    u^3 = x^1_1.
    \end{align*}
Therefore, the concatenated input vector of the overall network for composition is given by 
$u= [x^2_1, x^1_2,x^3_1, x^3_2, x^1_1]^\top$.
Further, the concatenated output vector of the three spreading networks is given by
$y=
[{y^1}^\top, {y^2}^\top, {y^3}^\top]^\top$, 
which is 
ordered ascending first with respect to the index of the subnetwork, i.e., the superscript, then ordered ascending with respect to the node index in each subnetwork, i.e.,  the subscript. Based on Definition~\ref{Def_Comp_M}, we have 
\begin{align*}
\underbrace{\begin{bmatrix}
  x^2_1\\ x^1_2\\x^3_1\\ x^3_2\\ x^1_1
\end{bmatrix}}_{u}
=
\underbrace{\begin{bmatrix}
      0 & 0& 1& 0& 0& 0& 0 \\
      0 & 1& 0& 0& 0 & 0& 0\\
      0 & 0& 0& 0& 0 & 1& 0\\
      0 & 0& 0& 0& 0 & 0& 1\\
      1 & 0& 0& 0& 0 & 0& 0\\
  \end{bmatrix}}_{M}
\underbrace{
\begin{bmatrix}
x^1_1\\ x^1_2\\ x^2_1\\x^2_2 \\ x^2_3 \\x^3_1\\x^3_2
\end{bmatrix}}_{y}.
\end{align*}
To highlight the properties of this composition, 
we examine the mapping from $x^{31}$ in $y$ to $x^{31}$ in $u$. Based on Definition~\ref{Def_Comp_M}, $u_3 = x^{k=3}_{b=1}$  indicates that we will consider  the $3^{rd}$ row in $M$, which is denoted $M_{3,:}$. Further, $x^3_1$ is the first node in the third Subnetwork, and thus
$M^{33}_1 = 1$. The rest of the entries in $M_{3,:}$ are zeros.
Hence, we have that
\begin{align*}
M_{3,:} =\left [
\begin{array}{rr|rrr|rr}
\undermat{M^{31}}{0 & 0 } & \undermat{M^{32}}{0 & 0 & 0} & \undermat{M^{33}}{1 & 0 } \\
\end{array}
\right ].    
\end{align*}
\vspace{1ex}
\end{example}

\subsection{Dissipativity Analysis of Composite Spreading Networks}
In this section, we study a composite network of $m$ constituent spreading networks $\mathcal{G}^k$, as described by~\eqref{eq:dyn}-\eqref{eq:output}, where $k \in \underline{m}$. We denote the composite spreading network as $\mathcal{G}_C$.
The composite spreading dynamics of $\mathcal{G}_C$ 
are defined from the dynamics of $\mathcal{G}^k$ by the composition matrix $M$ in~Definition~\ref{Def_Comp_M}. In particular, we can represent  
the dynamics of $\mathcal{G}_C$ by
\begin{multline} \label{Eq_SIS_Composite_Network_w_M}
  \dot {x} (t) = f_c(x) := \big({-\Gamma} + B - \diag(x)B\big) x(t) \\ 
  + \big(C - \diag(x)C\big) Mx(t),
\end{multline}
where $x=[{x^1}^\top, \cdots, {x^n}^\top]^\top \in \mathbb R^{\mathbf {n}}$.
Based on the composition rules, the transmission, recovery, and input matrices are given by $B = \diag\{B^1,\dots, B^m\}$, $\Gamma = \diag\{\Gamma^1,\dots, \Gamma^m\}$, and $C = \diag\{C^1,\dots, C^m\}$, respectively, where $B^k$, $\Gamma^k$, and $C^k$ are defined in~\eqref{Eq: SIS_Input}. We also use the conditions $u=My=Mx$ in Definition~\ref{Def_Comp_M}.
Unlike the strongly connected constituent spreading networks $\mathcal{G}^k$, the composite network~$\mathcal{G}_C$ can be weakly connected. 
Based on this formulation, we consider 
composite networks in which each subnetwork's outputs have been connected
to some inputs, which results 
in the composite spreading network having no external inputs.  

Dissipativity provides a bottom-up stability test method by which
the stability of a composite system can be determined
by the interconnection structure and properties of the subsystems~\cite{moylan1978stability,vidyasagar1981input}. 
In particular, the stability of a composite system can be inferred from the dissipativity of its subsystems, and a Lyapunov function 
for the composite system
can be constructed using the storage functions of the dissipative subsystems. Therefore, we first examine the conditions under which the composite spreading network $\mathcal{G}_C$ will have a unique disease-free equilibrium, based on the analysis of composite spreading networks derived from its constituent dissipative spreading networks.

\begin{theorem} \label{Thm-Stability-Composite-Network}
Consider a composite spreading network $\mathcal{G}_C$ that is composed of~$m$ spreading networks, denoted~$\mathcal{G}^k$ for $k\in\underline m$. 
For all $k\in\underline m$,
the dynamics of $\mathcal{G}^k$ is given by network $SIS$ dynamics with external inputs in~\eqref{Eq: SIS_Input}.
The  dynamics of the composite spreading network are given in~\eqref{Eq_SIS_Composite_Network_w_M}. 
The composite spreading network $\mathcal{G}_C$ will have a unique asymptotically  stable disease-free equilibrium if 
there exist $\alpha_k > 0$ for all $k \in \underline m$ such that 
\begin{align}
\label{Eq_Negative_Def}
     \begin{bmatrix}
          M^\top & I^\top
     \end{bmatrix}        
        \Psi 
    \begin{bmatrix}
        M \\ I
    \end{bmatrix}
        < \boldsymbol{0},  
\textnormal{\ and \ }
\Psi = 
\left[
\begin{array}{c|c}
\Psi^{11} & \Psi^{12} \\
\hline
\Psi^{21} & \Psi^{22}
\end{array}
\right],
\end{align}
where~$\Psi^{11} = \boldsymbol{0}$, $\Psi^{21} = {\Psi^{12}}^\top$, and 
\begin{align*}
    \Psi^{12} &= \diag\{\frac{1}{2} \alpha_1 {C^1}^\top,\dots,\frac{1}{2}\alpha_m {C^m}^\top\}, \\
      \  \Psi^{22} &=
    \diag\{\alpha^1(-\Gamma^1+B^1),\dots,\alpha_m (-\Gamma^m+B^m)\}.
\end{align*}
\end{theorem}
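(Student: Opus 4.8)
The plan is to run the standard compositional-dissipativity argument: I would build a Lyapunov function for the composite network $\mathcal{G}_C$ as a positively weighted sum of the constituent storage functions, differentiate it along~\eqref{Eq_SIS_Composite_Network_w_M}, bound the derivative using the per-network dissipativity already established in Theorem~\ref{Thm: SIS_Dissipativity}, and then show that the interconnection constraint $u=Mx$ turns the resulting quadratic form into exactly the matrix appearing in~\eqref{Eq_Negative_Def}. First I would introduce the candidate
\[
V(x) = \sum_{k=1}^m \alpha_k V^k(x^k) = \sum_{k=1}^m \frac{\alpha_k}{2}\|x^k\|^2,
\]
with the $\alpha_k>0$ from the hypothesis. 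Because each $\alpha_k>0$, this $V$ is continuously differentiable, satisfies $V(0)=0$ and $V(x)>0$ for $x\neq 0$, so it is a legitimate strict-Lyapunov candidate on the invariant cube $[0,1]^{\mathbf n}$.

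Next I would differentiate $V$ along the composite dynamics. Since $V$ is separable and $\dot x^k = f^k(x^k,u^k)$ for each $k$, I get $\dot V = \sum_{k} \alpha_k \nabla V^k(x^k)^\top f^k(x^k,u^k)$. Applying the dissipativity inequality~\eqref{eq_dissipativity_c} from Theorem~\ref{Thm: SIS_Dissipativity} to each summand, and using $y^k=x^k$ from~\eqref{eq:output}, yields
\[
\dot V \leq \sum_{k=1}^m \alpha_k\, S^k(u^k,y^k).
\]
The crux is then to recognize this weighted sum as a single quadratic form. Writing each $S^k$ in the matrix form of Definition~\ref{Def_Supply_Rate} and stacking over $k$, I would verify that $\sum_{k} \alpha_k S^k(u^k,y^k) = [u^\top, y^\top]\,\Psi\,[u^\top,y^\top]^\top$, with $\Psi$ exactly the block matrix of the statement: the concatenated, block-diagonal structure of $B$, $\Gamma$, $C$ forces $\Psi^{11}=\boldsymbol 0$, $\Psi^{12}=\diag\{\tfrac12\alpha_k {C^k}^\top\}$, $\Psi^{22}=\diag\{\alpha_k(-\Gamma^k+B^k)\}$, and the cross blocks combine via $\Psi^{21}={\Psi^{12}}^\top$ using that ${u^k}^\top {C^k}^\top y^k = {y^k}^\top C^k u^k$ is a scalar. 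Substituting the interconnection $u=My=Mx$ and $y=x$ from Definition~\ref{Def_Comp_M}, so that $[u^\top,y^\top]^\top = [M^\top, I^\top]^\top x$, gives
\[
\dot V \leq x^\top \begin{bmatrix} M^\top & I^\top \end{bmatrix}\Psi \begin{bmatrix} M \\ I \end{bmatrix} x.
\]

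To close, I would invoke the hypothesis~\eqref{Eq_Negative_Def}: negative definiteness of $[M^\top, I^\top]\,\Psi\,[M^\top, I^\top]^\top$ gives $\dot V < 0$ for every $x\neq 0$ while $\dot V=0$ at $x=0$, so $V$ is a strict Lyapunov function and the disease-free equilibrium $x=0$ is asymptotically stable. Uniqueness then follows because any equilibrium $x^*$ has $\dot x=\boldsymbol 0$ and hence $\nabla V(x^*)^\top f_c(x^*)=0$, which the strict negativity rules out unless $x^*=\boldsymbol 0$. The step I expect to be the main obstacle is the bookkeeping in assembling the per-network supply rates into precisely the block-diagonal $\Psi$ — in particular confirming that the cross terms aggregate to $\diag\{\tfrac12\alpha_k {C^k}^\top\}$ with no spurious off-block coupling, which is where the diagonal structure of $B,\Gamma,C$ and the placement rules for $M$ must be reconciled. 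A secondary point needing care is justifying that trajectories remain in the invariant, physically relevant set $[0,1]^{\mathbf n}$ so that the Lyapunov conclusion is valid on that domain.
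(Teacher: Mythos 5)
Your proposal is correct and follows essentially the same route as the paper's proof: a weighted sum $\sum_k \alpha_k V^k$ of the storage functions, the per-network dissipativity bound from Theorem~\ref{Thm: SIS_Dissipativity}, assembly of the supply rates into the quadratic form with $\Psi$, and substitution of $u=My=Mx$ to reach the LMI, after which negative definiteness gives asymptotic stability of $x=\boldsymbol{0}$ (the paper closes by invoking the standard compositional result of \cite[Prop.~2.1]{arcak2016networks} together with the zero-state condition $y^k(x^k,0)=\boldsymbol{0}\iff x^k=\boldsymbol{0}$, whereas you argue the Lyapunov conclusion and uniqueness directly, which amounts to the same thing).
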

\emph{Proof:} See Appendix~\ref{app:t2proof}. \hfill $\blacksquare$

Theorem~\ref{Thm-Stability-Composite-Network} leverages the relationship between storage functions and supply rate functions for $m$ spreading networks in order to construct a Lyapunov function that examines the spreading behavior of the composite network. Specifically, it uses dissipativity analysis to ensure that the composite spreading network will converge to a unique disease-free equilibrium. Consequently, we are able to examine the properties of the disease-free equilibrium by solving a linear matrix inequality.

\begin{remark}
  We study how the composition of the $m$ constituent spreading networks affects the spreading behavior of the entire composite network, $\mathcal{G}_C$. According to Theorem~\ref{Thm-Stability-Composite-Network}, the behavior of the composite network is influenced by two factors: 1) the composition matrix $M$ and 2) the input matrix $C^k$ for $k \in \underline{m}$. The composition matrix $M$ is a binary matrix that captures only the network topology, such as whether there is traffic between cities or whether two individuals have interactions. In contrast, the input matrix $C^k$ defines the strength of these interactions, such as the volume of traffic between cities or the duration of contact between individuals. 
  Typical restrictions during an epidemic do not modify the underlying
  topology of a network, e.g., highways are not dismantled, but
  they do affect the use of the network, e.g., travel restrictions
  may be put in place to restrict the use of highways. Modeling
  these interventions, we consider problems in which~$M$
  is fixed and in which 
  we can modify the strength of the coupling by scaling the entries in $C^k$ to influence the behavior of the composite spreading network.
\end{remark}

Consider another composite spreading network $\hat{\mathcal{G}}_C$, 
which is constructed by composing the exact same $m$ constituent spreading networks $\mathcal{G}^k$ for all $k \in \underline{m}$. Compared to $\mathcal{G}_C$, we obtain $\hat{\mathcal{G}}_C$ by scaling the coupling strength from network $\mathcal{G}^l$ to network $\mathcal{G}^k$ by a factor of $\theta_k \in \mathbb{R}_{>0}$. As a result, the input matrix for $\mathcal{G}^k$ becomes ${\hat{C}^k = \theta_k C^k}$, where $C^k$ is the original, unscaled input transmission matrix of the $k^{th}$ spreading network as defined in Definition~\ref{Def_Input_Transmission_Matrix}. The dynamics of the composite spreading network $\hat{\mathcal{G}}_C$, denoted as
$\dot{x} = f_{\hat{C}}(x)$, are given by
\begin{multline} \label{Eq_SIS_Composite_Network_G_theta}
  \dot {x} (t) = \big({-\Gamma} + B - \diag(x)B\big) x(t) \\
   + \big(\hat{C} - \diag(x)\hat{C}\big)Mx(t),
\end{multline}
where $\hat{C} = \diag\{\hat{C}^1,\dots, \hat{C}^m\}$ and 
the other symbols are the same as in~\eqref{Eq_SIS_Composite_Network_w_M}. 


\begin{corollary} \label{cor_G_hat_C}
Suppose that we can find scaling parameters
$\alpha_k > 0$ for all $k\in\underline m$ such 
that Theorem~\ref{Thm-Stability-Composite-Network} holds for 
$\mathcal{G}_C$. Then the composite spreading network $\hat{\mathcal{G}}_C$ has the globally asymptomatic stable disease-free equilibrium if $\theta_k\leq \alpha_k$
for all $k\in \underline m$.
\end{corollary}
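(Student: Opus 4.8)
The plan is to show that the scaled composite network $\hat{\mathcal{G}}_C$ itself satisfies the hypotheses of Theorem~\ref{Thm-Stability-Composite-Network}, so that its conclusion applies directly. The first step is to observe that scaling the input matrix does not destroy dissipativity of the constituent networks: replacing $C^k$ by $\hat{C}^k = \theta_k C^k$ leaves the dynamics in the form~\eqref{Eq: SIS_Input} with $\hat C^k$ in place of $C^k$, so Theorem~\ref{Thm: SIS_Dissipativity} applies verbatim and each scaled subnetwork is dissipative with respect to the same storage function $V^k = \tfrac{1}{2}\|x^k\|^2$ and the modified supply rate $\hat{S}^k(u^k,y^k) = \theta_k\,{y^k}^\top C^k u^k + {y^k}^\top(-\Gamma^k+B^k)y^k$. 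Thus the only change relative to $\mathcal{G}_C$ is the factor $\theta_k$ multiplying each coupling term ${y^k}^\top C^k u^k$.

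Next I would assemble a composite Lyapunov function $V(x) = \sum_{k=1}^m \hat{\alpha}_k V^k(x^k)$ from the subnetwork storage functions, exactly as in the proof of Theorem~\ref{Thm-Stability-Composite-Network}, with weights $\hat{\alpha}_k>0$ to be chosen from the certifying $\alpha_k$ and the $\theta_k$. Summing the dissipation inequalities~\eqref{eq_dissipativity_c} for the scaled subnetworks and substituting the interconnection $u = My = Mx$ from Definition~\ref{Def_Comp_M} gives $\dot{V} \le \sum_k \hat{\alpha}_k \hat{S}^k(u^k,y^k)\big|_{u=Mx}$, so the whole claim reduces to choosing the $\hat{\alpha}_k$ for which this weighted sum is negative for every nonzero admissible state. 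The hypothesis already supplies weights $\alpha_k$ certifying that the analogous \emph{unscaled} quantity $\sum_k \alpha_k\big({y^k}^\top C^k u^k + {y^k}^\top(-\Gamma^k+B^k)y^k\big)$, which is precisely the quadratic form $[M^\top\ I^\top]\Psi[M;I]$ of Theorem~\ref{Thm-Stability-Composite-Network}, is negative. The natural move is therefore to pick $\hat{\alpha}_k$ so that the scaled sum is dominated by this certified one, with the budget $\theta_k \le \alpha_k$ supplying the slack needed in the coupling coefficients.

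The main obstacle is that the coupling terms ${y^k}^\top C^k u^k$ and the blocks $-\Gamma^k + B^k$ are each sign-indefinite over all of $\mathbb{R}^{\mathbf{n}}$, so one cannot pass from the unscaled to the scaled form by a blanket matrix inequality. I would overcome this by restricting attention to the box $[0,1]^{\mathbf{n}}$, which is forward invariant for the composite $SIS$ dynamics~\eqref{Eq_SIS_Composite_Network_w_M} by the standard argument for $SIS$ models. On this set $y = x \ge \boldsymbol{0}$ and, since $M$ is a binary (hence nonnegative) matrix, $u = Mx \ge \boldsymbol{0}$; together with $C^k \ge \boldsymbol{0}$ this forces every coupling term ${y^k}^\top C^k u^k$ to be nonnegative. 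Consequently, lowering the coefficient of each such term via $\theta_k\le\alpha_k$ can only decrease the weighted supply-rate sum, so that, after matching the coupling coefficients to those of the certified form, one obtains $\sum_k \hat{\alpha}_k \hat{S}^k \le \sum_k \alpha_k S^k < 0$ on $[0,1]^{\mathbf{n}}\setminus\{\boldsymbol{0}\}$; the two quadratic terms $-\diag(x^k)B^k x^k$ and $-\diag(x^k)\hat{C}^k u^k$ dropped in forming $\hat S^k$ are themselves nonpositive on the box and only help.

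Finally, with $V$ positive definite on the box and $\dot{V}<0$ away from the origin, a standard Lyapunov argument yields that $x=\boldsymbol{0}$ is the unique equilibrium in $[0,1]^{\mathbf{n}}$ and is asymptotically stable with region of attraction equal to the whole box, i.e., the advertised globally asymptotically stable disease-free equilibrium for $\hat{\mathcal{G}}_C$ in~\eqref{Eq_SIS_Composite_Network_G_theta}. The delicate point to get right in the write-up is the exact weight assignment $\hat{\alpha}_k$ together with the bookkeeping on the nonnegative box, since this is the step where the threshold $\theta_k\le\alpha_k$ is actually produced, and care is needed there to ensure the certified margin encoded by the $\alpha_k$ genuinely absorbs the coupling increase.
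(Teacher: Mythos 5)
Your overall route is the same as the paper's: keep the storage functions $V^k=\tfrac{1}{2}\|x^k\|^2$, form the weighted sum $\sum_k \hat\alpha_k V^k$, bound its derivative by the weighted supply rates after substituting $u=Mx$, and argue that on the invariant box $[0,1]^{\mathbf n}$ every coupling term ${y^k}^\top C^k u^k$ is nonnegative (since $y=x\ge\boldsymbol 0$, $M\ge\boldsymbol 0$, $C^k\ge\boldsymbol 0$), so scaling those terms down cannot destroy the certified negativity. You are in fact more explicit than the paper about why this comparison must be made on the nonnegative box rather than as an unrestricted matrix inequality, which is a genuine improvement in rigor over the paper's bare assertion that ``multiplying scalars $\theta_k\in(0,1]$ to the linear matrix inequality preserves the inequality.''

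The gap sits exactly where you flag it: you never produce the weights $\hat\alpha_k$, and no choice obviously delivers the stated threshold $\theta_k\le\alpha_k$. If you take $\hat\alpha_k=\alpha_k$ --- which is what the paper's proof does --- the coupling coefficient changes from $\alpha_k$ to $\alpha_k\theta_k$, and the ``reducing a nonnegative term'' argument requires $\alpha_k\theta_k\le\alpha_k$, i.e.\ $\theta_k\le 1$, not $\theta_k\le\alpha_k$. If instead you take $\hat\alpha_k=1$ so that the coupling coefficient $\theta_k$ is dominated by the certified $\alpha_k$, the diagonal blocks become ${y^k}^\top(-\Gamma^k+B^k)y^k$ rather than $\alpha_k\,{y^k}^\top(-\Gamma^k+B^k)y^k$, and since these blocks are sign-indefinite the certified inequality no longer dominates the scaled one. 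So ``matching the coupling coefficients to the certified form'' is not free: any rescaling of $\hat\alpha_k$ rescales the indefinite diagonal part by the same factor. The paper itself sidesteps this by fixing $\hat\alpha_k=\alpha_k$ and working with $\theta_k\in(0,1]$, i.e.\ its appendix argument actually establishes the conclusion for $\theta_k\le 1$ on the nonnegative box and does not use the hypothesis $\theta_k\le\alpha_k$ as stated. To close your write-up you would either have to restrict to $\theta_k\le\min\{1,\alpha_k\}$ (equivalently, renormalize the certificate so that $\alpha_k\le 1$), or find an argument that genuinely exploits $\theta_k\le\alpha_k$ --- which neither your sketch nor the paper currently supplies.
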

\emph{Proof: } See Appendix~\ref{app:cor1proof}. \hfill $\blacksquare$

Theorem~\ref{Thm-Stability-Composite-Network} and Corollary~\ref{cor_G_hat_C} demonstrate that the spreading behavior of composite spreading networks can be analyzed through their constituent spreading networks from a dissipative perspective. These results are quite general, as the $m$ constituent spreading networks in~\eqref{Eq: SIS_Input} are always dissipative.
As a result, a wide range of interconnection patterns can be immediately characterized, allowing for the analysis of large-scale spreading networks by decomposing them into smaller, strongly connected networks. This decomposition simplifies the understanding of network dynamics at smaller scales. Additionally, we can formulate optimal resource allocation problems where the solutions lie within these interconnection patterns.
A key advantage of this approach over existing work 
is that this approach relaxes the assumption that the composite spreading network is strongly connected. Together, Theorem~\ref{Thm-Stability-Composite-Network} and Corollary~\ref{cor_G_hat_C} solve Problem~\ref{Problem-Comp-Dissipativity}.




\subsection{Graph Analysis of Composite Spreading Networks}
The composite spreading network $\mathcal{G}_C$ follows the 
network $SIS$ dynamics in~\eqref{Eq: SIS_Cpct}.
In this section, we leverage graph-theoretical analysis to study the conditions under which~\eqref{Eq_Negative_Def} in Theorem~\ref{Thm-Stability-Composite-Network} either has a solution or does not. We use these results to further demonstrate how dissipativity analysis can be applied to study the spreading dynamics of the composite network through its constituent spreading networks and to guide policy-making in pandemic mitigation efforts.

As noted above, the composite spreading network $\mathcal{G}_C$ is always at least weakly connected and is composed of $m$  constituent spreading networks $\mathcal{G}^k$, where $k \in \underline{m}$.
Weakly connected graphs can always be decomposed into strongly connected subgraphs using established algorithms such as Tarjan's algorithm~\cite{tarjan1985efficient}. 
And we can investigate the existence of a solution for~\eqref{Eq_Negative_Def} in Theorem~\ref{Thm-Stability-Composite-Network} by applying established results for strongly connected spreading networks from Proposition~\ref{Prop: SIS_Equi}
to the strongly connected components. 

We use an alternative representation for the dynamics $\dot{x} = f_c(x)$ of the composite spreading network $\mathcal{G}_C$ as given in~\eqref{Eq_SIS_Composite_Network_w_M}. Consider
\begin{equation}
\label{Eq: SIS_Composite_Network}
  \dot {x} (t) = \big(-\Gamma + B_C - \diag(x)B_C\big) x(t),
\end{equation}
where $x=[{x^1}^\top, \cdots, {x^n}^\top]^\top \in\mathbb R^{\mathbf {n}}$, and $\Gamma$ is from~\eqref{Eq_SIS_Composite_Network_w_M}.
We represent the transmission matrix in~\eqref{Eq: SIS_Composite_Network} as
\begin{equation}
\label{eq_trans_matrix_comp}
B_C = \left[
\begin{array}{cccc}
B^1 & \cdots & \cdots & B^{1m} \\
\vdots & \ddots & B^{ij} & \vdots \\
\vdots & B^{ji} & \ddots & \vdots \\
B^{m1} & \cdots & \cdots & B^m \\
\end{array}
\right],
\end{equation} 
where the diagonal blocks are given by $B^k$ in~\eqref{Eq: SIS_Cpct}. 
The off-diagonal block $B^{ij}$ captures the interactions from the $j^{th}$ spreading network to the $i^{th}$ spreading network for all distinct $i,j\in\underline m$. We interpret~\eqref{Eq: SIS_Composite_Network} as a way to view a composite spreading network through its $m$ constituent spreading networks $\mathcal{G}^k$. In contrast,~\eqref{Eq_SIS_Composite_Network_w_M} represents the process of combining these $m$ strongly connected networks $\mathcal{G}^k$ into a single composite spreading network, where $k \in \underline{m}$. One can check that~\eqref{Eq_SIS_Composite_Network_w_M} and~\eqref{Eq: SIS_Composite_Network} indeed represent the same dynamics of~$\mathcal{G}_C$. 
\begin{theorem} \label{Thm_Composite_Net_Individual}
Equation~\eqref{Eq_Negative_Def} in Theorem~\ref{Thm-Stability-Composite-Network} holds only if each of the constituent spreading networks $\{\mathcal{G}^k\}_{k=1}^m$ that comprise the composite spreading network $\mathcal{G}_C$ have a globally asymptotically stable disease-free equilibrium at $x^k=\boldsymbol{0}$.
\end{theorem}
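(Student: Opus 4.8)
The plan is to prove the necessity (``only if'') claim directly: assuming that~\eqref{Eq_Negative_Def} admits scaling parameters $\alpha_k>0$, I will show that each constituent network satisfies $\sigma(-\Gamma^k+B^k)<0$, which by Proposition~\ref{Prop: SIS_Equi} is exactly the condition for $\mathcal{G}^k$ to possess a globally asymptotically stable disease-free equilibrium at $x^k=\boldsymbol 0$. First I would collapse the block product in~\eqref{Eq_Negative_Def} into a single matrix. Writing out $\begin{bmatrix} M^\top & I\end{bmatrix}\Psi\begin{bmatrix}M\\ I\end{bmatrix}$ with $\Psi^{11}=\boldsymbol 0$ and $\Psi^{21}={\Psi^{12}}^\top$ yields $N:=M^\top\Psi^{12}+{\Psi^{12}}^\top M+\Psi^{22}$, and the hypothesis $N<\boldsymbol 0$ means $z^\top N z<0$ for every nonzero $z\in\mathbb R^{\mathbf n}$.

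Second, I would record the two structural facts that make the coupling terms vanish on well-chosen test vectors. Partition $M=[M_{kl}]$ into blocks $M_{kl}\in\mathbb R^{p_k\times n_l}$ that map the states of $\mathcal{G}^l$ into the inputs of $\mathcal{G}^k$. Because Definition~\ref{Def_Input_Vector} builds $u^k$ only from infected states of the other networks $\mathcal{G}^l$ with $l\neq k$, Definition~\ref{Def_Comp_M} forces the diagonal blocks to vanish, i.e.\ $M_{kk}=\boldsymbol 0$ for all $k$; also $\Psi^{12}=\diag\{\frac{1}{2}\alpha_k{C^k}^\top\}$ and $\Psi^{22}=\diag\{\alpha_k(-\Gamma^k+B^k)\}$ are block diagonal. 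Finally, since each $\mathcal{G}^k$ is strongly connected, $B^k\ge \boldsymbol 0$, and $\Gamma^k$ is diagonal, the matrix $-\Gamma^k+B^k$ is an irreducible Metzler matrix; by the Perron--Frobenius theorem its spectral abscissa $\lambda_k:=\sigma(-\Gamma^k+B^k)$ is a real eigenvalue admitting a positive real eigenvector $v^k$ with $(-\Gamma^k+B^k)v^k=\lambda_k v^k$.

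Third, I would evaluate the quadratic form on the real test vector $z$ that equals $v^k$ on the $k^{th}$ state block and $\boldsymbol 0$ elsewhere. The diagonal term contributes ${v^k}^\top\alpha_k(-\Gamma^k+B^k)v^k=\alpha_k\lambda_k\|v^k\|^2$. The coupling terms $z^\top M^\top\Psi^{12}z$ and its transpose vanish: block-diagonality of $\Psi^{12}$ places $\Psi^{12}z$ entirely in input block $k$, while $Mz$ has zero component in input block $k$ precisely because $M_{kk}=\boldsymbol 0$, so their inner product is zero. Hence $0>z^\top N z=\alpha_k\lambda_k\|v^k\|^2$, and since $\alpha_k>0$ and $v^k\neq\boldsymbol 0$ we conclude $\lambda_k<0$, so $\sigma(-\Gamma^k+B^k)\le 0$ and Proposition~\ref{Prop: SIS_Equi} delivers the claimed equilibrium for every $k\in\underline m$.

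I expect the main obstacle to be justifying the two structural reductions rigorously rather than the arithmetic: (i) arguing from Definitions~\ref{Def_Input_Vector} and~\ref{Def_Comp_M} that the interconnection never feeds a network's own states back as its own inputs, so that $M_{kk}=\boldsymbol 0$ and the block-supported $z$ annihilates the off-diagonal coupling; and (ii) invoking Perron--Frobenius to guarantee that the spectral abscissa of each irreducible Metzler block $-\Gamma^k+B^k$ is attained by a \emph{real} eigenvector, so that the real test vector $z$ is legitimate and isolates a single diagonal block of $N$. Once these are in place, the sign extraction is immediate.
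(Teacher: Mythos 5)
Your proof is correct, but it takes a genuinely different route from the paper's. The paper argues by contraposition at the level of the dynamics: assuming some constituent $\mathcal{G}^i$ has an unstable disease-free equilibrium, it shows $\sigma(B_C-\Gamma)\geq\sigma(B-\Gamma)=\sigma(B^i-\Gamma^i)>0$ via monotonicity of the spectral abscissa of Metzler matrices under the entrywise order, reads off from the Jacobian $-\Gamma+B_C$ at the origin that the composite disease-free equilibrium is unstable, and then invokes Theorem~\ref{Thm-Stability-Composite-Network} to conclude that no feasible $\alpha_k$ can exist. You instead work directly on the matrix inequality: you collapse it to $N=M^\top\Psi^{12}+{\Psi^{12}}^\top M+\Psi^{22}$, observe that $M_{kk}=\boldsymbol{0}$ (Definition~\ref{Def_Input_Vector} never feeds a network's own states back as its inputs) so that a test vector supported on the $k^{th}$ state block annihilates the coupling terms, and take that block to be the Perron eigenvector of the irreducible Metzler matrix $-\Gamma^k+B^k$ (exactly Lemma~\ref{lem:irr_M} in the appendix) to extract $0>z^\top Nz=\alpha_k\,\sigma(-\Gamma^k+B^k)\|v^k\|^2$. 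Your argument is purely linear-algebraic and self-contained (it uses neither Theorem~\ref{Thm-Stability-Composite-Network} nor the dynamics, nor the Metzler comparison lemma from the literature), it yields the slightly stronger conclusion $\sigma(-\Gamma^k+B^k)<0$ for every $k$, and it is insensitive to whether the inequality in~\eqref{Eq_Negative_Def} is read as negative definiteness over all of $\mathbb{R}^{\mathbf{n}}$ or only over the nonnegative orthant, since your test vector is nonnegative. What the paper's route buys in exchange is the intermediate dynamical fact that the composite network's disease-free equilibrium is itself unstable whenever a constituent's is; that is the statement Corollary~\ref{cor_comp_network_1} actually rests on, so your proof establishes the theorem as stated but would not by itself deliver that corollary.
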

\emph{Proof: } See Appendix~\ref{app:t3proof}. \hfill $\blacksquare$

The following corollary follows immediately 
from Theorem~\ref{Thm_Composite_Net_Individual}, and
its proof is omitted. 

\begin{corollary} \label{cor_comp_network_1}
If there exists at least one  strongly connected constituent spreading network $\mathcal{G}^k$, for $k \in \underline{m}$ that has an unstable disease-free equilibrium, then the weakly connected composite spreading network $\mathcal{G}_C$ cannot reach the disease-free equilibrium.
\end{corollary}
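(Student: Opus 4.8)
The plan is to argue directly at the level of the composite dynamics, converting the hypothesis on $\mathcal{G}^k$ into a statement about the spectral abscissa of the composite transmission matrix, which mirrors the block-wise reasoning behind Theorem~\ref{Thm_Composite_Net_Individual}. Indeed, the corollary is essentially the contrapositive of Theorem~\ref{Thm_Composite_Net_Individual} read back onto the dynamics: Theorem~\ref{Thm_Composite_Net_Individual} shows that \eqref{Eq_Negative_Def} forces each $\sigma(-\Gamma^k+B^k)<0$, so a single constituent with an unstable disease-free equilibrium must push the composite out of the stable regime. First I would invoke Proposition~\ref{Prop: SIS_Equi}: since $\mathcal{G}^k$ is strongly connected and its disease-free equilibrium is unstable, the dichotomy in Proposition~\ref{Prop: SIS_Equi} forces $\sigma(-\Gamma^k + B^k) > 0$.

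Next I would pass to the composite representation \eqref{Eq: SIS_Composite_Network}, namely $\dot x = (-\Gamma + B_C - \diag(x)B_C)x$, and record two structural facts about $-\Gamma + B_C$. First, because every transmission rate is nonnegative, $B_C$ is entrywise nonnegative, so $-\Gamma + B_C$ is a Metzler matrix, its off-diagonal entries being exactly those of $B_C$. Second, by the block form in \eqref{eq_trans_matrix_comp}, the matrix $-\Gamma^k + B^k$ is precisely the $k^{th}$ diagonal block, hence a principal submatrix, of $-\Gamma + B_C$.

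The key step is then the monotonicity of the spectral abscissa for Metzler matrices: the dominant (Perron) eigenvalue of a Metzler matrix is real and is never smaller than the spectral abscissa of any of its principal submatrices, since the nonnegative coupling blocks $B^{ij}$ can only raise the dominant eigenvalue. Applying this with the principal submatrix $-\Gamma^k + B^k$ gives $\sigma(-\Gamma + B_C) \geq \sigma(-\Gamma^k + B^k) > 0$. Finally, linearizing \eqref{Eq: SIS_Composite_Network} at $x = \boldsymbol{0}$ yields the Jacobian $-\Gamma + B_C$, because the term $\diag(x)B_C x$ is quadratic and contributes nothing to the Jacobian at the origin; a positive spectral abscissa then means this Jacobian has an eigenvalue with positive real part. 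By Lyapunov's indirect method the origin is an unstable equilibrium of the composite dynamics, so $\mathcal{G}_C$ cannot converge to the disease-free equilibrium, which is the claim.

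The main obstacle I anticipate is handling the monotonicity step while respecting the hypotheses of Proposition~\ref{Prop: SIS_Equi}. That proposition is stated only for strongly connected networks, whereas $\mathcal{G}_C$ is in general only weakly connected, so I cannot apply it directly to $-\Gamma + B_C$; instead I must extract instability of the composite from the principal-submatrix inequality itself. The one point needing care is confirming that passing to a principal submatrix of a Metzler matrix cannot increase the spectral abscissa, which rests essentially on $B_C$ being entrywise nonnegative. Once that is in place, the instability conclusion for the origin follows from the first-order linearization test and requires no connectivity assumption on $\mathcal{G}_C$.
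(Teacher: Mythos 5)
Your proposal is correct and follows essentially the same route as the paper: the paper omits a separate proof of Corollary~\ref{cor_comp_network_1} because the argument is already contained in the proof of Theorem~\ref{Thm_Composite_Net_Individual}, which likewise extracts $\sigma(-\Gamma^i+B^i)>0$ from Proposition~\ref{Prop: SIS_Equi}, propagates it to $\sigma(-\Gamma+B_C)>0$ using Metzler-matrix monotonicity, and concludes instability of the origin from the Jacobian $-\Gamma+B_C$. The only cosmetic difference is that you invoke principal-submatrix monotonicity of the spectral abscissa directly, whereas the paper passes through the block-diagonal matrix $B-\Gamma$ and uses the entrywise ordering $B_C>B$ together with strict monotonicity of $\sigma$ for Metzler matrices; both are valid and interchangeable here.
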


\begin{remark}
Theorem~\ref{Thm_Composite_Net_Individual} and Corollary~\ref{cor_comp_network_1} together demonstrate that suppressing the disease spread across the entire composite network is impossible unless the spread within every strongly connected constituent network $\mathcal{G}^k$ is also mitigated. For instance, in real-world epidemic scenarios, lifting lockdown restrictions between two isolated regions during a pandemic is only reasonable if the spread in both regions has already been sufficiently suppressed. 
\end{remark}

Theorem~\ref{Thm_Composite_Net_Individual} applies to weakly connected composite spreading networks, even though the constituent networks are assumed to be strongly connected. 
The generality of the theorem and corollary lies in the fact that any weakly connected graph can be decomposed into its strongly connected components. This decomposition provides a framework for analyzing the overall spreading behavior of a composite network by examining the dynamics of its strongly connected components with smaller sizes.

Theorem~\ref{Thm_Composite_Net_Individual} and Corollary~\ref{cor_comp_network_1} show that, regardless of the connectivity type of the composite spreading network $\mathcal{G}_C$, in order to ensure that the composite spreading network converges to the disease-free equilibrium, i.e., for~\eqref{Eq_Negative_Def} in Theorem~\ref{Thm-Stability-Composite-Network} to have a solution, we must ensure that the spreading in its constituent spreading networks is fading away. Based on this result, we propose the following method to change the transmission coupling strength between distinct spreading subnetworks $\mathcal{G}^k$ and $\mathcal{G}^l$ 
to suppress the spreading process in 
the composite spreading network $\mathcal{G}_C$.

Again, for the composite spreading network 
$\mathcal{G}_C$ defined in~\eqref{Eq_SIS_Composite_Network_w_M}, 
consider another composite spreading network $\hat{\mathcal{G}}_C$ defined in~\eqref{Eq_SIS_Composite_Network_G_theta}, where the input matrix of the constituent spreading network $\mathcal{G}^k$ is given by  $\theta_k C^k$, with $\theta_k\in(0,1)$, for all $k\in\underline m$.
\begin{corollary}
\label{cor:stability_new_composite_scale}
The composite spreading network $\mathcal{G}_C$ is unstable at the disease-free equilibrium $x=\mathbf{0}$, 
but all constituent spreading networks $\mathcal{G}^k$ are asymptotically stable at their respective disease-free equilibria 
 $x^k=\mathbf{0}$. If we can find a group of $\alpha_k>0$  such that Theorem~\ref{Thm-Stability-Composite-Network} holds for 
 $\hat{\mathcal{G}}_C$, then the composite spreading network $\hat{\mathcal{G}}_C$ has the stable disease-free equilibrium.
\end{corollary}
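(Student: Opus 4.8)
The plan is to prove this as a direct application of Theorem~\ref{Thm-Stability-Composite-Network} to the scaled network $\hat{\mathcal{G}}_C$, rather than by invoking Corollary~\ref{cor_G_hat_C}. The reason Corollary~\ref{cor_G_hat_C} is unavailable here is that its hypothesis requires the \emph{original} composite network $\mathcal{G}_C$ to already satisfy Theorem~\ref{Thm-Stability-Composite-Network}; but here $\mathcal{G}_C$ is unstable at $x=\boldsymbol{0}$, and since Theorem~\ref{Thm-Stability-Composite-Network} guarantees asymptotic stability whenever~\eqref{Eq_Negative_Def} is feasible, its contrapositive shows that no admissible $\alpha_k$ can exist for $\mathcal{G}_C$. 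Hence that route is closed, and we must instead treat the scaled network on its own terms. The key observation is that $\hat{\mathcal{G}}_C$, defined in~\eqref{Eq_SIS_Composite_Network_G_theta}, is itself a legitimate composite spreading network: it is composed of the same $m$ constituent networks $\mathcal{G}^k$, each now carrying the input transmission matrix $\hat{C}^k = \theta_k C^k$, and its dynamics take exactly the form~\eqref{Eq_SIS_Composite_Network_w_M} with $C$ replaced by $\hat{C} = \diag\{\hat{C}^1,\dots,\hat{C}^m\}$.

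First I would verify that the hypotheses of Theorem~\ref{Thm-Stability-Composite-Network} are met by $\hat{\mathcal{G}}_C$. Each scaled constituent network is still a network $SIS$ system with inputs in the sense of~\eqref{Eq: SIS_Input} (with transmission matrix $B^k$, recovery matrix $\Gamma^k$, and input matrix $\theta_k C^k$), so by Theorem~\ref{Thm: SIS_Dissipativity} it remains dissipative, now with respect to the storage function $V^k(x^k)=\tfrac{1}{2}\|x^k\|^2$ and the scaled supply rate $\hat{S}^k(u^k,y^k)=\theta_k\,{y^k}^\top C^k u^k + {y^k}^\top(-\Gamma^k+B^k)y^k$. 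Consequently the block matrix $\Psi$ assembled for $\hat{\mathcal{G}}_C$ differs from the one for $\mathcal{G}_C$ only in its off-diagonal block, which becomes $\Psi^{12}=\diag\{\tfrac{1}{2}\alpha_1\theta_1{C^1}^\top,\dots,\tfrac{1}{2}\alpha_m\theta_m{C^m}^\top\}$, whereas the diagonal block $\Psi^{22}=\diag\{\alpha_1(-\Gamma^1+B^1),\dots,\alpha_m(-\Gamma^m+B^m)\}$ is left unchanged by the scaling.

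With this structure in hand, the corollary's hypothesis that ``we can find $\alpha_k>0$ such that Theorem~\ref{Thm-Stability-Composite-Network} holds for $\hat{\mathcal{G}}_C$'' is precisely the statement that the matrix inequality~\eqref{Eq_Negative_Def}, with the $\Psi$ just described, is satisfied. Applying the conclusion of Theorem~\ref{Thm-Stability-Composite-Network} to $\hat{\mathcal{G}}_C$ then immediately yields that $\hat{\mathcal{G}}_C$ possesses a unique asymptotically stable disease-free equilibrium, which is the claim.

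Since Theorem~\ref{Thm-Stability-Composite-Network} does the heavy lifting, the only genuine work is bookkeeping, and the main obstacle is simply confirming that replacing $C^k$ by $\theta_k C^k$ perturbs only the coupling block $\Psi^{12}$ and leaves the stabilizing diagonal block $\Psi^{22}$ intact. For intuition (though not as part of the formal implication) I would note that the premise ``all constituent networks are asymptotically stable while $\mathcal{G}_C$ is unstable'' is exactly what makes such $\alpha_k$ attainable in practice: by Proposition~\ref{Prop: SIS_Equi} each $-\Gamma^k+B^k$ has nonpositive spectral abscissa, so $\Psi^{22}$ is already the stabilizing part of~\eqref{Eq_Negative_Def}, and shrinking the coupling factors $\theta_k$ toward zero drives the destabilizing off-diagonal contribution down until the inequality becomes feasible. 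This premise, however, is not logically required for the stated implication, which is purely an instance of Theorem~\ref{Thm-Stability-Composite-Network}.
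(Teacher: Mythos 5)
Your proof is correct and takes essentially the same route as the paper, which omits the proof on the grounds that the corollary ``follows directly from Theorem~\ref{Thm-Stability-Composite-Network} and Corollary~\ref{cor_G_hat_C}'': your argument is exactly that direct application of Theorem~\ref{Thm-Stability-Composite-Network} to $\hat{\mathcal{G}}_C$, with the bookkeeping check that scaling $C^k$ to $\theta_k C^k$ only alters the off-diagonal block $\Psi^{12}$. Your side observation that Corollary~\ref{cor_G_hat_C} cannot actually be invoked here---because the instability of $\mathcal{G}_C$ at $x=\boldsymbol{0}$ makes \eqref{Eq_Negative_Def} infeasible for $\mathcal{G}_C$ itself---is valid and slightly sharper than the paper's own one-line justification.
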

The corollary follows directly from Theorem~\ref{Thm-Stability-Composite-Network} 
and Corollary~\ref{cor_G_hat_C}. Therefore, its proof is omitted. Corollaries~\ref{cor_G_hat_C}-~\ref{cor:stability_new_composite_scale} provide a method for designing the interconnection weights between the constituent spreading networks. By finding appropriate scaling factors $\theta_k$, $k\in\underline{m}$, these corollaries help determine the strength of the connections between the networks, ensuring that the spreading process of the composite network converges to the disease-free equilibrium. As discussed, the interconnection weights can represent various factors, such as the volume of traffic between cities or the duration of contact between individuals.
Thus, Theorem~\ref{Thm_Composite_Net_Individual} 
along with Corollaries~\ref{cor_comp_network_1} and ~\ref{cor:stability_new_composite_scale} further solve Problem~\ref{Problem-Application}.

\section{Simulation and Applications} \label{sec_application}
In this section, 
we use a real-world contact network to illustrate how to use this work to guide decision-making in pandemic mitigation scenarios.
Consider a contact tracing network that captures the mixing patterns of children in school environments, specifically from a study conducted in a French school involving children aged $6$-$12$~\cite{stehle2011high}. This network, shown in Figure~\ref{fig:French_Contacting_Network}, consists of $228$ nodes representing the students and $5,539$ edges representing their pairwise interactions over one day. Each student belongs to a specific class at the school, which is indicated by the color of the node. 
The edges between students represent the contact duration, with thicker edges indicating longer cumulative contact time throughout the day. It is observed that individuals within the same class engage in more frequent and prolonged interactions compared to those from different classes.

We consider disease-spreading through airborne transmission, such as influenza (flu). 
The transmission rate of seasonal flu is closely related to the contact time between two individuals, as longer exposure typically increases the likelihood of transmission~\cite{lakdawala2012ongoing}. 
Therefore, our key question is: during a flu season, without closing the school, how much should we intervene to reduce the overall interaction time between different classes in order to prevent an outbreak at the school?

We use several disease-spreading parameters that can be determined through public health research. We model the infectious period for influenza as $0.5 - 1.7$ days~\cite{cori2012estimating}. We can then construct a diagonal recovery matrix $\Gamma \in \mathbb R^{228\times 228}$, where 
the average recovery rate of each individual ($[\Gamma]_{ii}$) is  uniformly randomly sampled from the interval $[\frac{1}{1.7}, \frac{1}{0.5}]$.

 We use the result on 
 the role of heterogeneity in contact timing and duration in network models of influenza spread in schools
 from~\cite{toth2015role} to assume that the basic reproduction number  $R_0 = 1.05$. We use the network $SIS$ model from~\eqref{Eq: SIS_Composite_Network} to model flu.
 Longer cumulative contacting time will result in a higher transmission rate, and we therefore construct the transmission matrix $B_C = \theta \times B_T$. The matrix $B_T \in {\mathbb R}_{\geq0}^{5539\times 5539}$ is a symmetric weighted matrix of the undirected contacting network in Figure~\ref{fig:French_Contacting_Network}. The parameter $\theta$ scales the contacting time to the transmission rate. According to Proposition~\ref{Prop: SIS_Equi}, we have $\rho(\theta \Gamma^{-1}B_T) = R_0 = 1.05$, and we obtain~$\theta = 6.5376 \times 10^{-5}$. 

 Given that the basic reproduction number $R_0 = 1.05$ at school, according to Proposition~\ref{Prop: SIS_Equi}, an outbreak is likely. To address this mitigation challenge, we apply Theorems~\ref{Thm-Stability-Composite-Network} and~\ref{Thm_Composite_Net_Individual}.
 First, we compute and plot the basic reproduction numbers of the ten network $SIS$ spreading models, corresponding to the ten classes, as shown in Figure~\ref{fig:R_0_Ten_Classes}. The plot reveals that all basic reproduction numbers are less than one. Based on Corollary~\ref{cor_comp_network_1}, to prevent a potential outbreak, it is unnecessary to prioritize isolating any classes.  Next, we assess the level of restrictions to be implemented on the ten interconnected classes, which are numbered from Classes $1$ to $10$.
 


We use Theorem~\ref{Thm-Stability-Composite-Network}, Corollaries~\ref{cor_G_hat_C} and~\ref{cor:stability_new_composite_scale} to compute the level of restriction necessary to prevent an outbreak. According to~\eqref{Eq: SIS_Input} and Definition~\ref{Def_Input_Transmission_Matrix}, we construct the input matrices for the ten classes. The input matrix $C^1\in\mathbb{R}_{\geq 0}^{23\times 103}$ for Class~$1$ is shown in Figure~\ref{fig:Input_Matrix_7},  which shows that $23$ students in this class interact with $103$ students from the other connected classes at school. The color gradient of the heatmap represents the transmission rates between students. 
 
 We further visualize the matrix $\Psi\in\mathbb R^{1480\times 1480}$ from Theorem~\ref{Thm-Stability-Composite-Network} and the composition matrix $M\in\mathbb R^{1480\times 228}$ from Definition~\ref{Def_Comp_M} in Figures~\ref{fig:Diss_Matrix_Phi} and~\ref{fig:Binary_Matrix_M}, respectively. Figure~\ref{fig:Diss_Matrix_Phi} illustrates the block-partitioning structure of $\Psi$, where the diagonal block $\Psi_{22}$ is determined by the transmission and recovery matrices of each class, and the off-diagonal block matrices $\Psi_{12}=\Psi^\top_{21}$ capture the input matrices of each class, scaled by $\frac{1}{2}$. Further, Figure~\ref{fig:Binary_Matrix_M} depicts the composition matrix $M$ whose entries are $M_{ij}\in\{0,1\}$ for $i\in \underline{228}$ and $j\in\underline {1480} $.

 Last, based on Corollary~\ref{cor:stability_new_composite_scale}, we find that $\zeta = 0.798$ is the required scaling factor for adjusting the transmission coupling strength between the constituent networks to stabilize the overall composite spreading network. This value ensures that by decreasing the average interaction time between any pair of classes to less than $79\%$ of the original interaction time, there will be no outbreak among these ten connected classes.
 

\begin{figure}
    \centering
    \includegraphics[trim={1cm 2cm 2cm 1cm}, clip, width=\columnwidth]{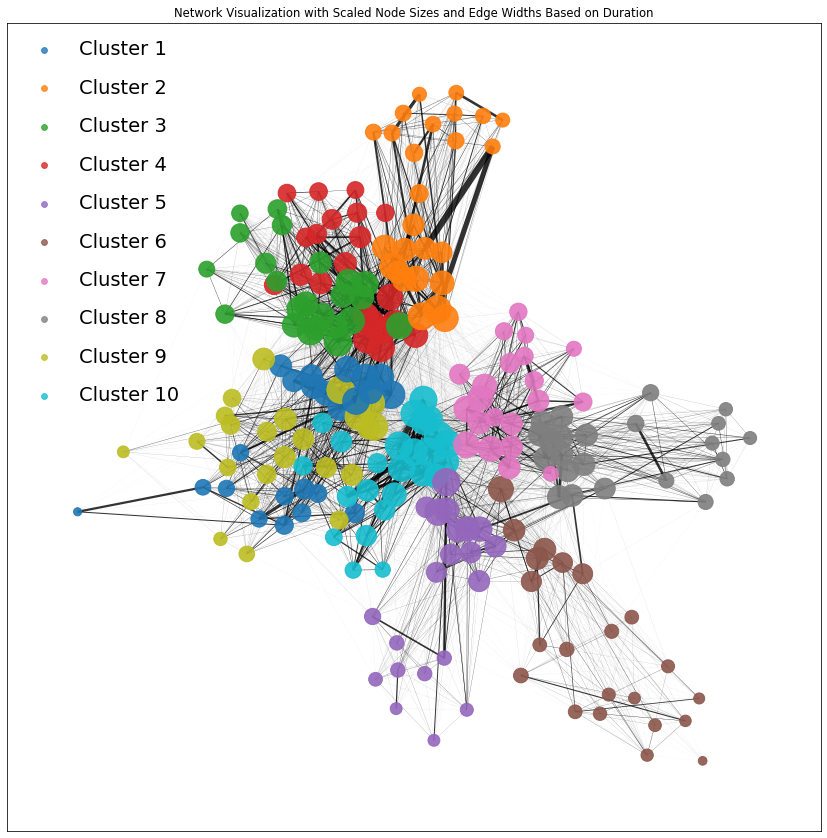}
    \caption{A undirected contacting network in a French school involving children aged $6$-$12$~\cite{stehle2011high}. This network consists of $228$ nodes representing the students and $5,539$ edges representing their pairwise interactions over one day. Each student belongs to a specific class (indexed from $1$ to $10$) at the school, which is indicated by the color of the node. 
The edges between students represent the contact duration, with thicker edges indicating longer cumulative contact time throughout the day.
    }    \label{fig:French_Contacting_Network}
\end{figure}

\begin{figure}
    \centering
    \includegraphics[width=1\linewidth]{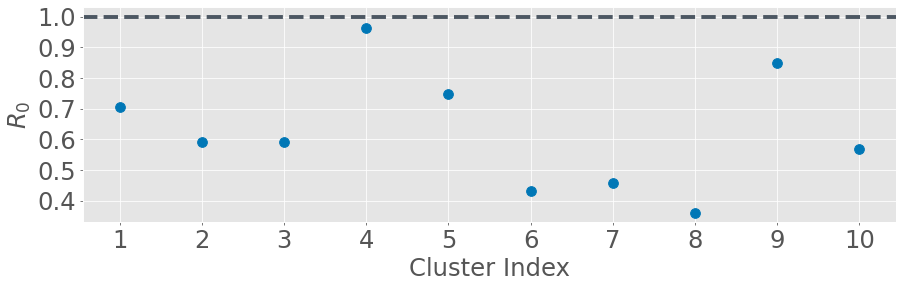}
    \caption{$R_0$ of the ten classes. We present the basic reproduction numbers of the ten classes. The basic reproduction number of the $k^{th}$ spreading network is computed as $\rho((\Gamma^k)^{-1}B^k)$, $k\in\underline{10}$. We observe that the basic reproduction numbers for all classes are less than one, indicating that it is unnecessary to isolate any classes to avoid an outbreak.}
    \label{fig:R_0_Ten_Classes}
\end{figure}

\begin{figure}
    \centering
    \includegraphics[width=1\linewidth]{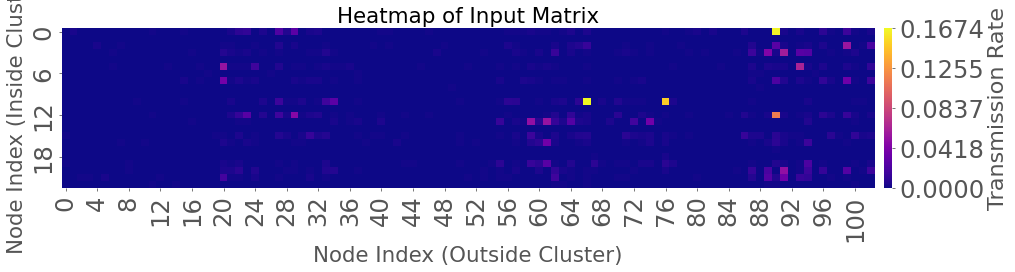}
    \caption{
    The input matrix $C^1\in\mathbb R^{23\times 103}$ for Class $1$.  The input matrix shows that $23$ students in this class interact with $103$ students from the other connected classes. The color gradient of the heatmap represents the transmission rates between students.}
    \label{fig:Input_Matrix_7}
\end{figure}

\begin{figure}
    \centering
    \includegraphics[width=1\linewidth]{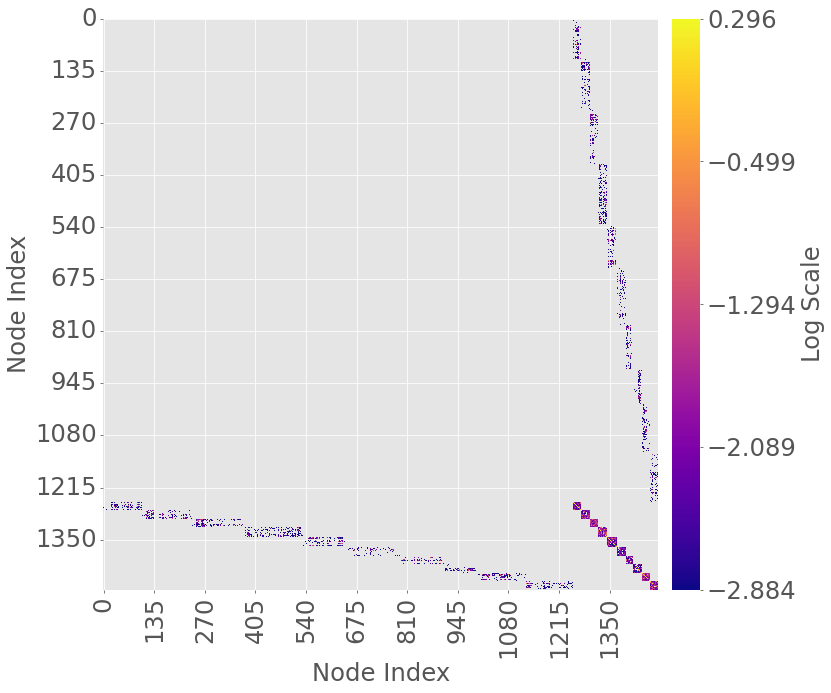}
    \caption{The matrix $\Psi\in\mathbb R^{1480\times 1480}$ defined in~\eqref{Eq_Negative_Def}. We use the log scale of the entries to better plot them.}
    \label{fig:Diss_Matrix_Phi}
\end{figure}

\begin{figure}
    \centering
    \includegraphics[width=1\linewidth]{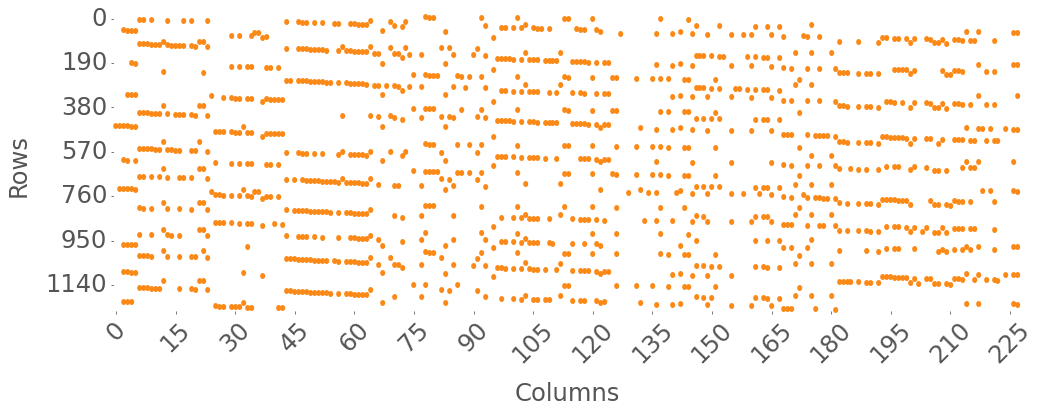}
    \caption{The binary composition matrix $M\in\mathbb R^{1480\times 228}$ defined in Definition~\ref{Def_Comp_M}.}
    \label{fig:Binary_Matrix_M}
\end{figure}
\section{Conclusion}
\label{sec_conclusion}
In this work, we developed a comprehensive model for spreading networks with defined inputs and outputs, as well as an operation for composing such networks. Using dissipativity theory, we introduced storage and supply rate functions to characterize the input-output behavior of network spreading dynamics. Building on these analyses, we established conditions under which disease spreading in a composite network $SIS$ model is guaranteed to decay to zero. Additionally, we derived explicit conditions under which such guarantees cannot be ensured.

To demonstrate the practical implications of our results, we applied our framework to simulations of influenza spread in a primary school setting. Using the developed analysis, we quantified the effectiveness of intervention strategies by showing that reducing the average interaction time between any pair of classes to a specific fraction of the original interaction time can effectively prevent an outbreak on campus.

Future work will explore the impact of control inputs in one constituent spreading network on the behavior of other constituent networks, as well as on the overall composite spreading network. Currently, our framework assumes that spreading networks evolve on the same time scale. Extending the framework to accommodate spreading dynamics occurring at different resolutions is a promising direction for further research.
\bibliographystyle{IEEEtran}
\bibliography{main}

\newpage
\appendix
\subsection{Linear Algebra Preliminaries}


\begin{lemma}
\baike{\cite[Sec. 2.1 and Lemma 2.3]{varga2009matrix_book}} 
\label{lem:irr_M} Suppose that $M$ is an irreducible Metzler matrix. Then, $s\left(M\right)$ is a simple eigenvalue of $M$ and
there exists a unique (up to scalar multiple) vector $x\gg \boldsymbol{0}$ such
that $Mx=s\left(M\right)x$. Let $z>\boldsymbol{0}$ be a vector in $\mathbb R^{n}$. If
$Mz<\lambda z$, then $s(M)<\lambda$. If $Mz=\lambda z$, then $s(M)=\lambda$.
If $Mz>\lambda z$, then $s(M)>\lambda$.
\end{lemma}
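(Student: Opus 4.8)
The plan is to reduce the statement to the classical Perron--Frobenius theorem for irreducible nonnegative matrices by exploiting the Metzler structure, and then to obtain the three comparison inequalities by testing against the left Perron eigenvector. Throughout, $s(M)$ is read as the spectral abscissa (the largest real part among the eigenvalues of $M$), which for a Metzler matrix is attained by a real eigenvalue.

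First I would shift $M$ into the nonnegative cone. Since $M$ is Metzler, its off-diagonal entries are nonnegative, so choosing $c>0$ large enough that $c+M_{ii}\ge 0$ for every $i$ makes $N:=M+cI$ entrywise nonnegative. Adding a multiple of the identity changes only the diagonal (self-loops), so the strong connectivity of the associated digraph is unchanged and $N$ inherits irreducibility from $M$. I may then invoke Perron--Frobenius for irreducible nonnegative matrices: $\rho(N)$ is a simple eigenvalue, and it admits a unique-up-to-scale eigenvector $x\gg\boldsymbol{0}$. Because $\lambda\in\mathrm{spec}(M)$ iff $\lambda+c\in\mathrm{spec}(N)$ with identical eigenvectors, and every eigenvalue $\lambda$ of $M$ satisfies $\mathrm{Re}(\lambda)+c\le |\lambda+c|\le\rho(N)$ with equality realized by the real Perron root, I get $s(M)=\rho(N)-c$. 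This transfers simplicity and the strictly positive eigenvector $x$ back to $M$, establishing the first two assertions.

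For the comparison statements I would use the left Perron eigenvector. Applying the first part to $M^\top$ (again an irreducible Metzler matrix with the same spectrum, hence the same $s(M)$) yields $w\gg\boldsymbol{0}$ with $w^\top M = s(M)\,w^\top$; since $w\gg\boldsymbol{0}$ and $z>\boldsymbol{0}$ we have $w^\top z>0$. Given $Mz<\lambda z$, left-multiplying by $w^\top$ and using that every entry of $w$ is positive turns the componentwise strict inequality into the scalar strict inequality $s(M)\,w^\top z = w^\top M z < \lambda\,w^\top z$, and dividing by $w^\top z>0$ gives $s(M)<\lambda$. The case $Mz=\lambda z$ gives $s(M)\,w^\top z=\lambda\,w^\top z$ directly, whence $s(M)=\lambda$, and $Mz>\lambda z$ is symmetric, yielding $s(M)>\lambda$.

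The genuine content, and hence the main obstacle, lives entirely in the classical Perron--Frobenius facts invoked in the first step, namely that for an irreducible nonnegative $N$ the spectral radius is a simple eigenvalue with a strictly positive eigenvector. If I had to prove these from scratch rather than cite them, I would derive $(I+N)^{n-1}\gg\boldsymbol{0}$ from irreducibility and combine it with a Collatz--Wielandt variational characterization of $\rho(N)$; everything after that (the Metzler shift and the left-eigenvector test) is routine. Since the statement is quoted from Varga, the intended proof is presumably just this reduction, and the only care needed is in verifying that irreducibility and the spectral abscissa survive both the diagonal shift and the transpose.
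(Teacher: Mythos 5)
Your proposal is correct. The paper offers no proof of this lemma at all --- it is quoted directly from Varga's book as a linear-algebra preliminary in the appendix --- so there is nothing in the paper to compare against; your argument (shift $M$ by $cI$ to reach an irreducible nonnegative matrix, invoke Perron--Frobenius, transfer back via $s(M)=\rho(M+cI)-c$ using $\mathrm{Re}(\lambda)+c\le|\lambda+c|\le\rho(M+cI)$, and obtain the three comparison statements by pairing with the left Perron eigenvector $w\gg\boldsymbol{0}$ of the equally irreducible Metzler matrix $M^\top$) is the standard derivation, and every step checks out.
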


\begin{lemma}
\label{lem:irr_spe}\cite[Prop. 1]{bivirus}
Suppose that $\varLambda$ is a negative diagonal matrix in $\mathbb R^{n\times n}$
and $N$ is an irreducible nonnegative matrix in $\mathbb R^{n\times n}$.
Let $M=\varLambda+N$. Then, $s(M)<0$ if and only if $\rho(-\varLambda^{-1}N)<1$,
$s(M)=0$ if and only if $\rho(-\varLambda^{-1}N)=0$, and $s(M)>0$
if and only if $\rho(-\varLambda^{-1}N)~>~1$. 
\end{lemma}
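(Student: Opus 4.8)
The plan is to reduce everything to the Perron--Frobenius structure of the two matrices involved and to the comparison inequalities already recorded in Lemma~\ref{lem:irr_M}. First I would observe that $M=\varLambda+N$ is an irreducible Metzler matrix: its off-diagonal entries coincide with those of the nonnegative irreducible matrix $N$, so $M$ inherits irreducibility and has nonnegative off-diagonal entries, making Lemma~\ref{lem:irr_M} applicable to it. Likewise, since $\varLambda$ is negative diagonal, $-\varLambda^{-1}$ is a positive diagonal matrix, so $P:=-\varLambda^{-1}N$ is obtained from $N$ by positive row scalings; hence $P$ is nonnegative and irreducible with the same zero pattern as $N$. By Perron--Frobenius, $\rho(P)$ is a simple eigenvalue of $P$ admitting an eigenvector $v\gg\boldsymbol 0$.

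The key computation is to feed this Perron eigenvector into $M$. From $Pv=\rho(P)v$, i.e.\ $-\varLambda^{-1}Nv=\rho(P)v$, left-multiplying by $\varLambda$ gives $Nv=-\rho(P)\varLambda v$, whence
\[
Mv=\varLambda v+Nv=\bigl(1-\rho(P)\bigr)\varLambda v.
\]
Because $\varLambda$ is negative diagonal and $v\gg\boldsymbol 0$, we have $\varLambda v\ll\boldsymbol 0$. Reading off the sign of the scalar factor $1-\rho(P)$ then yields a clean trichotomy: if $\rho(P)<1$ then $Mv\ll\boldsymbol 0 = 0\cdot v$; if $\rho(P)=1$ then $Mv=\boldsymbol 0 = 0\cdot v$; and if $\rho(P)>1$ then $Mv\gg\boldsymbol 0 = 0\cdot v$. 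Applying the comparison statements of Lemma~\ref{lem:irr_M} with the test vector $z=v>\boldsymbol 0$ and threshold $\lambda=0$ converts these three vector inequalities into $s(M)<0$, $s(M)=0$, and $s(M)>0$, respectively.

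Finally I would upgrade these three one-directional implications to the claimed equivalences by a trichotomy argument. The hypotheses $\rho(P)<1$, $\rho(P)=1$, $\rho(P)>1$ are mutually exclusive and exhaust the possibilities (as $\rho(P)\ge 0$ is real), and the conclusions $s(M)<0$, $s(M)=0$, $s(M)>0$ are likewise mutually exclusive and exhaustive. Hence each forward implication forces its converse: for instance, if $s(M)<0$ but $\rho(P)\ge 1$, the already-proved implications would give $s(M)\ge 0$, a contradiction, and the remaining two cases are identical. This establishes all three ``if and only if'' statements; note that the middle case here naturally pairs $s(M)=0$ with $\rho(P)=1$.

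I expect the only real subtlety to be the bookkeeping that guarantees $P$ is irreducible with a strictly positive Perron eigenvector and that $M$ is irreducible Metzler, since these are precisely the hypotheses that let me invoke Lemma~\ref{lem:irr_M} with a strictly positive test vector. Once the eigenvector identity $Mv=\bigl(1-\rho(P)\bigr)\varLambda v$ is in hand, the rest is pure sign-tracking and the trichotomy closure.
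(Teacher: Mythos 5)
The paper does not prove this lemma at all---it is imported verbatim as a preliminary, citing \cite[Prop.~1]{bivirus}---so there is no in-paper argument to compare yours against. Your proof is correct and is essentially the standard Perron--Frobenius argument for this equivalence: the identity $Mv=(1-\rho(P))\varLambda v$ for the Perron eigenvector $v\gg\boldsymbol 0$ of $P=-\varLambda^{-1}N$ is exactly the right pivot, the reduction to Lemma~\ref{lem:irr_M} with test vector $v$ and threshold $\lambda=0$ is valid (irreducibility of $M$ and of $P$ both follow from that of $N$ because diagonal perturbations and positive row scalings do not change the off-diagonal zero pattern), and the trichotomy closure correctly upgrades the three one-directional implications to equivalences. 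One point worth making explicit: your proof establishes the middle equivalence as $s(M)=0\iff\rho(-\varLambda^{-1}N)=1$, whereas the lemma as printed says $\rho(-\varLambda^{-1}N)=0$. That is a typo in the statement (for $n\ge 2$ an irreducible nonnegative matrix has strictly positive spectral radius, so $\rho=0$ is impossible), and your version is the correct one, consistent with the cited source; you flag this only parenthetically, but it deserves to be stated as a correction rather than an aside.
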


\subsection{Proof of Theorem~\ref{Thm: SIS_Dissipativity}} \label{app:t1proof}
First we discuss the properties of the storage function. 
The storage function for the $k^{th}$ spreading network in~\eqref{Eq: SIS_Input} is defined as the $l^2$-norm of its infected state vector $x^k \in \mathbb{R}^{n_k}$, where $k \in \underline{m}$. Consequently, the storage function $V^k(x^k)$ is continuously differentiable with respect to $x^k$ for all $k \in \underline{m}$. Under the condition that the infected state $x^k \in [0, 1]^{n_k}$, $V^k(x^k)$ is positive definite for the spreading dynamics and satisfies $V^k(x^k) = 0$ only when $x^k = \boldsymbol{0}$.
Thus, we can leverage Definition~\ref{Def: Dissipativity} to show the network spreading dynamics is dissipative with based on this storage function.

In order to prove that the $k^{th}$ network $SIS$ dynamics in~\eqref{Eq: SIS_Input} is dissipative with respect to the supply
rate function
\begin{align*}
   S^k(u^k,y^k) = y^k(t)^\top C^k u^k(t) + y^k(t)^\top(-\Gamma^k+ B^k) y^k(t), 
\end{align*}
we need to show that the
storage function $V^k(x^k) = \frac{1}{2}\|x^k\|^2$
introduced in Definition~\ref{Def_Storage_Function} satisfies
\begin{align*}
 \nabla V^k(x^k)^\top f^k(x^k,u^k)\leq S^k(u^k,y^k).   
\end{align*}

To compare the gradient of the
storage function along the trajectory of the system dynamics, and the
supply rate function, we have that
\begin{align*}
&\nabla V^k(x^k)^\top f^k(x^k,u^k)- S^k(u^k,y^k)\\
&={x^k}^\top(-\Gamma^k+B^k)x^k - {x^k}^\top \diag(x^k)B^kx^k(t) 
\\
& \ \ \ \ + {x^k}^\top (C^k - \diag(x^k)C^k) u^k (t)- {y^k}^\top C^k u^k(t) 
\\
&\ \ \ \ - {y^k}^\top (-\Gamma^k+B^k) y^k.
\end{align*}
According to~\eqref{eq:output}, the output vector of the $k^{th}$ spreading network equals its state vector, i.e., $y^k(t) = x^k(t)$, for all $k\in\underline m$. By substituting this condition into the equation above, we have that
\begin{align}
\label{Eq: Them_1_Proof}
- x^k{^\top} \diag(x^k)B^kx^k
 - {x^k}^\top\diag(x^k)C^k u^k\leq 0.
\end{align}
Hence, according to Definition~\ref{Def: Dissipativity}, for all $k\in\underline m$, the $k^{th}$ spreading network is dissipative with respect to the storage function and the supply rate function defined in~Definitions~\ref{Def_Storage_Function} and~\ref{Def_Supply_Rate}, respectively.
\hfill $\blacksquare$

\subsection{Proof of Lemma~\ref{Lem: supply_rate}} \label{app:l1proof}
 We study the first term  ${y^k}^\top C^k u^k$ in $S^k(u^k,y^k)$ for $k\in\underline m$.
    We use the input vector $u^k$ and the output vector $y^k$ to represent the infected proportions, such that $u^k\in[0,1]^{p_k}$ and $y^k\in[0,1]^{n_k}$ for $k\in\underline m$. Based on the definition of the input transmission matrix in Definition~\ref{Def_Input_Transmission_Matrix}, 
    the input matrix $C^k$ is a non-negative matrix. Therefore, ${y^k}^\top C^k u^k\geq 0$ for all $y^k$ and $u^k$ and $k\in\underline m$.

    Then, we analyze the second term ${y^k}^\top (-\Gamma^k +B^k) y^k$ in $S^k(u^k,y^k)$. To ensure that the supply rate is negative, we must satisfy ${y^k}^\top(-\Gamma^k + B^k){y^k} < 0$. 
   Under the condition that the  spreading network $\mathcal{G}^{k}$ is undirected, 
    we obtain that the transmission matrix 
    $B^k$ and the matrix $-\Gamma^k +B^k$ are symmetric.
    Therefore, the quadratic form ${y^k}^\top(-\Gamma^k + B^k){y^k} < 0$ indicates that $\sigma(-\Gamma^k +B^k)<\mathbf{0}$. 
    According to~Proposition~\ref{Prop: SIS_Equi}, 
    the spreading network $\mathcal{G}^k$ must have the globally stable disease-free equilibrium at $x^k = \mathbf{0}$ for $k\in\underline m$. \hfill $\blacksquare$

\subsection{Proof of Theorem~\ref{Thm-Stability-Composite-Network}} \label{app:t2proof}
Consider the network $SIS$ spreading dynamics of $\mathcal{G}^k$ given by~\eqref{eq:dyn} and~\eqref{eq:output}. 
According to Definition~\ref{Def_Storage_Function}, we define
the storage function of the $k^{th}$ spreading network as $V^k(x^k) = \frac{1}{2}\|x^k\|$ with $x^k\in\mathbb R^{n^k}$. Hence, 
we consider a weighted sum of the storage functions 
that serves as a Lyapunov function for the composite spreading network $\mathcal{G}_C$:
\begin{align*}
 V_C(x) = \sum_{k=1}^m \alpha_k V^k(x^k),\textnormal{ \ where \ } \alpha_k> 0, \,\, k\in\underline m.
\end{align*}
According to the definition, $V_C(x)\geq 0$ for all $x\in\mathbb{R}^\mathbf{n}$, since $V^k(x^k)\geq 0$ for all $x^k\in[0,1]^{n_k}$, and $\alpha_k> 0$ for all $k\in\underline m$.

According to Theorem~\ref{Thm: SIS_Dissipativity}, the $k^{th}$ $SIS$ constituent spreading network is dissipative with respect to the positive definite, continuously differentiable storage function $V^k(x^k)=\frac{1}{2}\|x^k\|^2$, and the quadratic supply rate function $S^k(u^k,y^k) = y^k(t)^\top C^k u^k(t) + y^k(t)^\top(-\Gamma^k+ B^k) y^k(t)$ for $k\in\underline m$. 

Under the fact that 1) the coupling strength given by the  composition matrix $M$ in Definition~\ref{Def_Comp_M} is a binary matrix, and 2) the input vector $u^k$ for the $k^{th}$ network $SIS$ dynamics in~\eqref{Eq: SIS_Input} contains no states from itself, we can verify that 
\begin{align*}
 \nabla V_C(x)^\top f_c(x) = \sum_{k=1}^m \alpha_k \nabla V^k(x^k)^\top f^k(x^k,u^k).
\end{align*}
Further, based on Theorem~\ref{Thm: SIS_Dissipativity}, we have that
\begin{align*}
\nonumber
&\sum_{k=1}^m \alpha_k \nabla V^k(x^k)^\top f^k(x^k,u^k)
\leq\sum_{k=1}^m \alpha^k S^k(u^k,g^k)\\
&=\sum_{k=1}^m \alpha^k
\begin{bmatrix}
u^k\\
y^k
\end{bmatrix}^\top 
\begin{bmatrix}
0 & \frac{1}{2}{C^k}^{\top}\\
\frac{1}{2}C^k & -\Gamma^k+ B^k
\end{bmatrix}
\begin{bmatrix}
u^k \\ y^k
\end{bmatrix}
= \begin{bmatrix}
u\\
y
\end{bmatrix}^\top 
\Psi
\begin{bmatrix}
u \\ y
\end{bmatrix}\\
& = \begin{bmatrix}
My\\
y
\end{bmatrix}^\top 
\Psi
\begin{bmatrix}
My \\ y
\end{bmatrix}
= y^\top\begin{bmatrix}
M\\
I
\end{bmatrix}^\top 
\Psi
\begin{bmatrix}
M \\ I
\end{bmatrix}y\\
&=x^\top
\begin{bmatrix}
M\\
I
\end{bmatrix}^\top 
\Psi
\begin{bmatrix}
M \\ I
\end{bmatrix}x,
\end{align*}
where we use the condition in Definition~\ref{Def_Comp_M} that $u = M y$, and we can observe all states, i.e., $y = x$. Further, $\Psi$ is defined in the theorem.

Therefore, under the condition that $x\in[0,1]^{\mathbf{n}}$, 
if we can find $\alpha^k> 0$ for all $k\in\underline m$ such that 
\begin{align*}
\begin{bmatrix}
M\\
I
\end{bmatrix}^\top \Psi \begin{bmatrix}
M \\ I
\end{bmatrix}< \boldsymbol{0},  
\end{align*}
then we will ensure that $\nabla V_C(x)^\top f_c(x)<0$ for all $x\in[0,1]^{\mathbf{n}}$. 
Further, according to~\eqref{eq:output}, the state-out function is $y^k(x^k, u^k) = x^k$ for all $k\in\underline m$. Therefore, we have
$y^k(x^k, 0)=\boldsymbol{0}$ if and only if $x^k=\boldsymbol{0}$, for all $k\in\underline m$.
Thus, according to the discussion from~\cite[Proposition~2.1]{arcak2016networks}, the disease-free equilibrium at
$x=\boldsymbol{0}$ is asymptotically stable for $x\in [0,1]^\mathbf{n}$.
\hfill $\blacksquare$

\subsection{Proof of Corollary~\ref{cor_G_hat_C}} \label{app:cor1proof}
Consider that we can find a collection of $\alpha_k > 0$ for all $k \in \underline{m}$ such that Theorem~\ref{Thm-Stability-Composite-Network} holds for the composite spreading network $\mathcal{G}_C$. Then, for $\theta_k \in (0, 1]$ for all $k \in \underline{m}$, we have that
\begin{align*}
     \begin{bmatrix}
          M & I
     \end{bmatrix}        
        \hat\Psi 
    \begin{bmatrix}
        M \\ I
    \end{bmatrix}
        < \boldsymbol{0},  
\textnormal{\ and \ }
\hat\Psi = 
\left[
\begin{array}{c|c}
\hat{\Psi}^{11} & \hat{\Psi}^{12} \\
\hline
\hat{\Psi}^{21} & \hat{\Psi}^{22}
\end{array}
\right],
\end{align*}
where 
\begin{align*}
    \hat{\Psi}^{12} &= 
    \begin{bmatrix}
       \frac{1}{2} \alpha_1 \theta_1 {C^1}^\top &   &\\
        & \ddots & \\
        &  &   \frac{1}{2}\alpha_m\theta_m{C^m}^\top\\
    \end{bmatrix},\\
\hat{\Psi}^{11} &=  \Psi^{11},
\hat{\Psi}^{21} = {\Psi^{12}}^\top, \textnormal{\ and \ } \hat{\Psi}^{22} ={\Psi}^{22}.
\end{align*}
We utilize the fact that multiplying scalars ${\theta^k \in (0, 1]}$ to the linear matrix inequality in Theorem~\ref{Thm-Stability-Composite-Network} preserves the inequality. By substituting $\hat{C}^k = \theta_k C^k$ into $\hat{\Psi}^{12}$, we can apply Theorem~\ref{Thm-Stability-Composite-Network} to demonstrate that the composite spreading network $\hat{\mathcal{G}}_C$ has a globally asymptotically stable disease-free equilibrium. \hfill $\blacksquare$

\subsection{Proof of Theorem~\ref{Thm_Composite_Net_Individual}} \label{app:t3proof}
Without loss of generality, consider the condition that there exists one strongly connected constituent spreading network $\mathcal{G}^i$ 
with~$i \in \underline{m}$
such that its disease-free equilibrium is unstable. According to Proposition~\ref{Prop: SIS_Equi}, we have that $\sigma({\Gamma^i}-B^i)> 0$ and 
$\sigma({\Gamma^j}-B^j)\leq 0$ for
$i\in\underline m$ and $j\in\underline m\setminus\{i\}$. 

Based on the fact that the eigenvalues of a block diagonal matrix are the union of the spectra of its individual block matrices, we conclude that
\begin{align*}
&\sigma(B-\Gamma)=\sigma(\diag\{B^1,\cdots,B^m\}-\diag\{\Gamma^1,\cdots,\Gamma^m\})\\
&=\sigma(\diag\{B^1-\Gamma^1,\cdots,B^m-\Gamma^m\}\\
&=\max_{k\in\underline m}\{\sigma(B^k-\Gamma^k)\}
= \sigma(B^i-\Gamma^i) > 0.
\end{align*}

Then, we compare $\sigma(B-\Gamma)$ and $\sigma(B_C-\Gamma)$. Based on the dynamics of the composite spreading network $\mathcal{G}_C$ in~\eqref{Eq: SIS_Composite_Network}, we obtain the transmission matrix $B_C$ in~\eqref{eq_trans_matrix_comp} by adding off-diagonal entries to $B=\diag\{B^1,\cdots,B^m\}$, denoted by $B^{ij}$ for $i,j \in \underline{m}$, $i \neq j$. Furthermore, given that transmission rates can only take positive values, we have $B^{ij} > 0$ and $B_C > 0$ for all $i,j \in \underline{m}$ in~\eqref{Eq: SIS_Composite_Network}. Hence, both $B_C-\Gamma$ and $B-\Gamma$ are Metzler matrices.
As demonstrated in~\cite[Lemma 2]{cvetkovic2020stabilizing}, for two Metzler matrices $A, B \in \mathbb{R}^{n \times n}$, if $A > B$, then $\sigma(A) > \sigma(B)$. Thus, we have that
\begin{align*} \sigma(B_C-\Gamma) > \sigma(B-\Gamma) > 0,
\end{align*} since $B_C > \diag\{B^1,\cdots,B^m\}$.

Last, we consider the Jacobian matrix of the dynamics of the
composite weakly connected spreading network in~\eqref{Eq: SIS_Composite_Network} evaluated at the disease-free equilibrium  $x = \mathbf{0}$, given by
\begin{align*}
    J_{x=\mathbf{0}} &= (-\Gamma+B_C)+\diag(B_Cx)+\diag(x)B_C\\&=-\Gamma+B_C.
\end{align*}
Thus, the disease-free equilibrium at 
$x = \mathbf{0}$
is unstable, since $\sigma (J_{x=\mathbf{0}}) = \sigma(B_C-\Gamma)>0$. Further, we conclude that 
when there exists a strongly connected constituent spreading network $\mathcal{G}^i$ such that the disease-free equilibrium of 
$\mathcal{G}^i$
is unstable, we must have that
the weakly connected composite network is unstable at its disease-free equilibrium. Hence, according to Theorem~\ref{Thm-Stability-Composite-Network}, we cannot find any feasible solution to satisfy~\eqref{Eq_Negative_Def} to show that the composite spreading network $\mathcal{G}^k$ is stable at the disease-free equilibrium $x = \mathbf{0}$.
\hfill $\blacksquare$

\end{document}